\documentclass[hidelinks,11pt]{article}

\usepackage{amsmath}
\usepackage{amssymb}
\usepackage{amsfonts,amsthm}
\usepackage{thmtools,thm-restate}
\usepackage{tabularx,lipsum,environ}
\usepackage{multirow}
\usepackage{hyperref}

\newtheorem{theorem}{Theorem}
\newtheorem{lemma}{Lemma}
\newtheorem{Rule}{Rule}[theorem]
\newtheorem{Rulel}{Rule}[lemma]
\newtheorem{claim}{Claim}[theorem]
\newtheorem{corollary}{Corollary}

\newtheorem{observation}{Observation}
\newtheorem{proposition}{Proposition}

\newcommand{\Oh}{O}
\newcommand{\W}[1]{{W[#1]}}
\newcommand{\cW}[1]{{co-W[#1]}}

\newenvironment{claimproof}[1]{\par\noindent\emph{Proof:}\space#1}{{\leavevmode\unskip\penalty9999 \hbox{}\nobreak\hfill\quad\hbox{$\lrcorner$}}\medskip}

\usepackage{tikz}
\usepackage[a4paper,bindingoffset=0.2in,
            top=1.1in, bottom=1.1in, left=1.0in, right=1.0in,
            footskip=.5in]{geometry}

\newcommand{\ignore}[1]{}

\makeatletter
\newcommand{\problemtitle}[1]{\gdef\@problemtitle{#1}}
\newcommand{\probleminput}[1]{\gdef\@probleminput{#1}}
\newcommand{\problemquestion}[1]{\gdef\@problemquestion{#1}}
\NewEnviron{problem}{
  \problemtitle{}\probleminput{}\problemquestion{}
  \BODY
  \par\addvspace{.5\baselineskip}
  \noindent
  \begin{tabularx}{\textwidth}{@{\hspace{\parindent}} l X c}
    \multicolumn{2}{@{\hspace{\parindent}}l}{\@problemtitle} \\
    \textbf{Input:} & \@probleminput \\
    \textbf{Task:} & \@problemquestion
  \end{tabularx}
  \par\addvspace{.5\baselineskip}
}

\title{Parameterized Aspects of Strong Subgraph Closure\thanks{This work is supported by Research Council of Norway via project ``CLASSIS''.}}

\author{
Petr A. Golovach\footnote{Department of Informatics, University of Bergen, Norway. Emails: \texttt{\{petr.golovach, pinar.heggernes, paloma.lima\}@uib.no}}
\and
Pinar Heggernes$^\dagger$
\and
Athanasios L. Konstantinidis\footnote{Department of Mathematics, University of Ioannina, Greece. Emails: \texttt{skonstan@cc.uoi.gr, charis@cs.uoi.gr}}
\and
Paloma T. Lima$^\dagger$
\and
Charis Papadopoulos$^\ddagger$
}

\date{}

\pagestyle{plain}

\begin{document}
\maketitle

\begin{abstract}
Motivated by the role of triadic closures in social networks, and the importance of finding a maximum subgraph avoiding a fixed pattern, we introduce and initiate the parameterized study of the {\sc Strong $F$-closure} problem, where $F$ is a fixed graph. This is a generalization of {\sc Strong Triadic Closure}, whereas it is a relaxation of {\sc $F$-free Edge Deletion}. In {\sc Strong $F$-closure}, we want to select a maximum number of edges of the input graph $G$, and mark them as {\it strong edges}, in the following way: whenever a subset of the strong edges forms a subgraph isomorphic to $F$, then the corresponding induced subgraph of $G$ is {\it not} isomorphic to $F$. Hence the subgraph of $G$ defined by the strong edges is not necessarily $F$-free, but whenever it contains a copy of $F$, there are additional edges in $G$ to destroy that strong copy of $F$ in $G$.

We study \textsc{Strong $F$-closure} from a parameterized perspective with various natural parameterizations. Our main focus is on the number $k$ of strong edges as the parameter.
We show that the problem is FPT with this parameterization for every fixed graph $F$, whereas it does not admit a polynomial kernel even when $F =P_3$. In fact, this latter case is equivalent to the {\sc Strong Triadic Closure} problem, which motivates us to study this problem on input graphs belonging to well known graph classes. We show that {\sc Strong Triadic Closure} does not admit a polynomial kernel even when the input graph is a split graph, whereas it admits a polynomial kernel when the input graph is planar, and even $d$-degenerate. Furthermore, on graphs of maximum degree at most 4,  we show that {\sc Strong Triadic Closure} is FPT with the above guarantee parameterization $k - \mu(G)$, where $\mu(G)$ is the maximum matching size of $G$. We conclude with some results on the parameterization of \textsc{Strong $F$-closure} by the number of edges of $G$ that are not selected as strong.
\end{abstract}

\section{Introduction}
Graph modification problems are at the heart of parameterized algorithms. In particular, the problem of deleting as few edges as possible from a graph so that the remaining graph satisfies a given property has been studied extensively from the viewpoint of both classical and parameterized complexity for the last four decades \cite{CyganFKLMPPS15,DowneyF13,Yannakakis1981}. For a fixed graph $F$, a graph $G$ is said to be {\it F-free} if $G$ has no induced subgraph isomorphic to $F$. The \textsc{$F$-Free Edge Deletion} problem asks for the removal of minimum number of edges from an input graph $G$ so that the remaining graph is $F$-free. In this paper, we introduce a relaxation of this problem, which we call \textsc{Strong $F$-closure}. Our problem is also a generalization of the {\sc Strong Triadic Closure} problem, which asks to select as many edges as possible of a graph as {\it strong}, so that whenever two strong edges $uv$ and $vw$ share a common
endpoint $v$, the edge $uw$ is also present in the input graph (not necessarily strong). This problem is well studied in the area of social networks \cite{EK10,BK14}, and its classical computational complexity has been studied recently both on general graphs and on particular graph classes \cite{ST14,KP17}.

In the \textsc{Strong $F$-closure} problem, we have a fixed graph $F$, and we are given an input graph $G$, together with an integer $k$. The task is to decide whether we can select at least $k$ edges of $G$ and mark them as {\it strong}, in the following way: whenever the subgraph of $G$ spanned by the strong edges contains an induced subgraph isomorphic to $F$, then the corresponding induced subgraph of $G$ on the same vertex subset is not isomorphic to $F$. The remaining edges of $G$ that are not selected as strong, will be called {\it weak}. Consequently, whenever a subset $S$ of the strong edges form a copy of $F$, there must be an additional strong or weak edge in $G$ with endpoints among the endpoints of edges in $S$. A formal definition of the problem is easier to give via spanning subgraphs. If two graphs $H$ and $F$ are isomorphic then we write $H \simeq F$, and if they are not isomorphic then we write $H \not\simeq F$. Given a graph $G$ and a fixed graph $F$, we say that a (not necessarily induced)
 subgraph $H$ of $G$ \emph{satisfies the $F$-closure} if, for every $S \subseteq V(H)$ with $H[S] \simeq F$, we have that $G[S] \not\simeq F$. In this case, the edges of $H$ form exactly the set of strong edges of $G$.

\begin{problem}
 \problemtitle{\textsc{Strong $F$-closure}}
  \probleminput{A graph $G$ and a nonnegative integer $k$.}
  \problemquestion{Decide whether $G$ has a spanning subgraph $H$ that satisfies the $F$-closure, such that $|E(H)| \ge k$.}
\end{problem}

Based on this definition and the above explanation, the terms ``marking an edge as weak (in $G$)'' and ``removing an edge (of $G$ to obtain $H$)'' are equivalent, and we will use them interchangeably. An induced path on three vertices is denoted by $P_3$. Relating \textsc{Strong $F$-closure} to the already mentioned problems, observe that \textsc{Strong $P_3$-closure}  is exactly {\sc Strong Triadic Closure}. Observe also that
a solution for \textsc{$F$-free Edge Deletion} is a solution for  \textsc{Strong $F$-closure}, since the removed edges in the first problem can simply be taken as the weak edges in the second problem. However it is important to note that the reverse is not always true. All of the mentioned problems are known to be NP-hard. The paremeterized complexity of \textsc{$F$-free Edge Deletion} has been studied extensively when parameterized by $\ell =$ the number of removed edges. With this parameter, the problem is FPT if $F$ is of constant size \cite{Cai96}, whereas it becomes W[2]-hard when parameterized by the size of $F$ even for $\ell=0$~\cite{KhotR02}.
 Moreover, there exists a small graph $F$ on seven vertices for which \textsc{$F$-free Edge Deletion} does not admit a polynomial kernel \cite{KW13} when the problem is parameterized by $\ell$.
To our knowledge, {\sc Strong Triadic Closure} has not been studied with respect to parameterized complexity before our work.

In this paper, we study the parameterized complexity of {\sc Strong $F$-closure} with three different natural parameters:
the number of strong edges,
the number of strong edges above guarantee (maximum matching size),
and the number of weak edges.
\begin{itemize}
\item
In Section~\ref{sec:strong}, we show that  {\sc Strong $F$-closure} is FPT when parameterized by $k=|E(H)|$ for a fixed $F$. Moreover, we prove that the problem is FPT even when we allow the size of $F$ to be a parameter, that is, if we parameterize the problem by $k+|V(F)|$, except if $F$ has at most one edge. In the latter case {\sc Strong $F$-closure} is \W{1}-hard when parameterized by $|V(F)|$ even if $k\leq 1$. We also observe that {\sc Strong $F$-closure} parameterized by $k+|V(F)|$ admits a polynomial kernel if $F$ has a component with at least three vertices and the input graph is restricted to be $d$-degenerate.
This result is tight in the sense that it cannot be generalized to nowhere dense graphs.
\item In Section~\ref{sec:stc}, we focus on the case $F=P_3$, that is, we investigate the parameterized complexity of {\sc Strong Triadic Closure}. We complement the FPT results of the previous section by proving that  {\sc Strong Triadic Closure} does not admit a polynomial kernel even on split graphs.
It is straightforward to see that if $F$ has a connected component on at least three vertices, then a matching in $G$ gives a feasible solution for {\sc Strong $F$-closure}.
Thus the maximum matching size $\mu(G)$ provides a lower bound for the maximum number of edges of $H$.
Consequently, parameterization above this lower bound becomes interesting. 
Motivated by this, we study {\sc Strong $F$-closure} parameterized by $|E(H)|-\mu(G)$.
It is known that
{\sc Strong Triadic Closure} can be solved in polynomial time on subcubic graphs, but it is NP-complete on graphs of maximum degree at most $d$ for every $d\geq 4$ \cite{KonstantinidisN17}.
As a first step in the investigation of the parameterization above lower bound, we show that  {\sc Strong Triadic Closure} is FPT  on graphs of maximum degree at most 4, parameterized by $|E(H)|-\mu(G)$. 
\item
 Finally, in Section~\ref{sec:conc}, we consider {\sc Strong $F$-closure} parameterized by $\ell=|E(G)|-|E(H)|$, that is, by the number of weak edges. We show that the problem is FPT and admits a polynomial (bi-)kernel if $F$ is a fixed graph. Notice that, contrary to the parameterization by $k+|V(F)|$, we cannot hope for FPT results when the problem is parameterized by $\ell+|V(F)|$. This is because, when $\ell=0$, \textsc{Strong $F$-closure} is equivalent to asking whether $G$ is $F$-free, which is equivalent to solving \textsc{Induced Subgraph Isomorphism} that is well known to be \W{1}-hard~\cite{DowneyF13,KhotR02}. We also state some additional results and open problems.
 Our findings are summarized in Table \ref{tab:results}.
\end{itemize}

\begin{table}
\begin{center}
\setlength\extrarowheight{3pt}
\begin{tabular}{|l l l c|}\hline
 Parameter & Restriction & Parameterized Complexity & Theorem \\\hline
\multirow{4}{*}{$|E(H)|+|V(F)|$}  & $|E(F)|\leq 1$   & \W{1}-hard        & \ref{prop:empty}, \ref{prop:one-edge}\\\cline{2-4}
                                             &  $|E(F)|\geq 2$   & FPT        & \ref{thm:FPT-strong}\\\cline{2-4}

                                             & $F$ has a cc with 3 vertices,                     & \multirow{2}{*}{polynomial kernel} & \multirow{2}{*}{\ref{prop:bigcomp-kern}}\\[-3pt]
                                             & $G$ is $d$-degenerate             &                & \\\cline{1-4}

 \multirow{2}{*}{$|E(H)|$}                   & $F$ has no isolated vertices      & FPT   & \ref{cor:no-isolates}\\\cline{2-4}
                                             & $F=P_3$, $G$ is split             & {no polynomial kernel} & {\ref{theo:strong:nopolykernel:P3:split}}\\\cline{1-4}

 {$|E(H)|-\mu(G)$} & $F=P_3$, $\Delta(G)\leq 4$    & {FPT} & {\ref{theo:strong:bounded4:fpt}}\\\cline{1-4}

 \multirow{2}{*}{$|E(G)|-|E(H)|$   }
					      & \multirow{2}{*}{None} & FPT          & \ref{theo:weak:bounded:fpt}\\\cline{3-4}
                                             &                              & polynomial (bi-)kernel & \ref{theo:weak:bounded:polykernel}\\
\hline

                                             \end{tabular}
\end{center}
\vspace*{-0.05in}
\caption{Summary of our results: parameterized complexity analysis of {\sc Strong $F$-closure}.}
\label{tab:results}
\vspace*{-0.1in}
\end{table}

\vspace*{-0.1in}
\section{Preliminaries}
All graphs considered here are simple and undirected.
We refer to Diestel's classical book \cite{Diestel12} for standard graph terminology that is undefined here.
Given an input graph $G$, we use the convention that $n=|V|$ and $m=|E|$.
For a graph $F$, it is said that $G$ is \emph{$F$-free} if $G$ has no induced subgraph isomorphic to $F$.
For a positive integer $d$,  $G$ is \emph{$d$-degenerate} if every subgraph of $G$ has a vertex of degree at most $d$. The maximum degree of  $G$ is denoted by $\Delta(G)$.
We denote by $G+H$ the disjoint union of two graphs $G$ and $H$. For a positive integer $p$, $pG$ denotes the disjoint union of $p$ copies of $G$.
A \emph{matching} in $G$ is a set of edges having no common endpoint.
The \emph{maximum matching number}, denoted by $\mu(G)$, is the maximum number of edges in any matching of $G$.
We say that a vertex $v$ is {\it covered} by a matching $M$ if $v$ is incident to an edge of $M$.
An \emph{induced matching}, denoted by $qK_2$, is a matching $M$ of $q$ edges such that $G[V(M)]$ is isomorphic to $qK_2$.

Let us give a couple of observations on the nature of our problem. An $F$-graph of a subgraph $H$ of $G$ is an induced subgraph $H[S] \simeq F$ such that $G[S] \simeq F$.
Clearly, if $H$ is a solution for \textsc{Strong $F$-closure} on $G$, then there is no $F$-graph in $H$, even though $H$ might have induced subgraphs isomorphic to $F$.
For \textsc{$F$-free Edge Deletion,} note that the removal of an edge that belongs to a forbidden subgraph might generate a new forbidden subgraph.
However, for \textsc{Strong $F$-closure} problem, it is not difficult to see that the removal of an edge that belongs to an $F$-graph cannot create a new critical subgraph.

\begin{observation}\label{obs:subgraph}
Let $G$ be a graph, and  let $H$ and $H'$ be spanning subgraphs of $G$ such that $E(H')\subseteq E(H)$. If $H$ satisfies the $F$-closure for some $F$, then $H'$ satisfies the $F$-closure.
\end{observation}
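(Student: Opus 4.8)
The plan is to prove the contrapositive via a short sandwiching argument on edge counts, exploiting the fact that $F$ is fixed, so every copy of $F$ has exactly $|E(F)|$ edges.

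First I would assume that $H'$ does \emph{not} satisfy the $F$-closure, so there is a set $S \subseteq V(G)$ with $H'[S] \simeq F$ and $G[S] \simeq F$. Since $E(H') \subseteq E(H) \subseteq E(G)$ and all three graphs are spanning subgraphs on the common vertex set $V(G)$, restricting to $S$ yields the chain $E(H'[S]) \subseteq E(H[S]) \subseteq E(G[S])$. Now $H'[S] \simeq F$ and $G[S] \simeq F$ force $|E(H'[S])| = |E(G[S])| = |E(F)|$, so the two outer edge sets in the chain have the same size; hence all three edge sets coincide, and in particular $H[S] = H'[S] \simeq F$. Thus the same set $S$ witnesses $H[S] \simeq F$ together with $G[S] \simeq F$, contradicting the assumption that $H$ satisfies the $F$-closure. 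Therefore $H'$ satisfies the $F$-closure.

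The argument is essentially routine; the one point worth stating carefully is that $H'[S] \simeq F$ does \emph{not} by itself imply $H[S] \simeq F$, since $H$ may carry additional edges with both endpoints in $S$. This is precisely where the assumed violation is used: knowing that $G[S]$ is \emph{also} a copy of $F$ pins down the number of edges of $H[S]$ and collapses the chain of inclusions to equalities. I do not anticipate any genuine obstacle here.
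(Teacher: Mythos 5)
Your argument is correct, and it is essentially the canonical justification for this statement (the paper states it as an observation without proof): the containment $E(H'[S])\subseteq E(H[S])\subseteq E(G[S])$ together with $H'[S]\simeq F\simeq G[S]$ forces $H[S]=H'[S]$, so any violating set $S$ for $H'$ is also a violating set for $H$. You also correctly flag the only subtlety, namely that $H'[S]\simeq F$ alone does not give $H[S]\simeq F$ without using $G[S]\simeq F$ to pin down the edge count.
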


In particular, Observation~\ref{obs:subgraph} immediately implies that if an instance of \textsc{Strong $F$-closure} has a solution, it has a solution with \emph{exactly} $k$ edges.

We conclude this section with some definitions from parameterized complexity and kernelization. A problem with input size $n$ and parameter $k$ is \emph{fixed parameter tractable} (FPT), if it can be solved in time $f(k) \cdot n^{O(1)}$ for some computable function $f$.
A \emph{bi-kernelization}~\cite{AlonGKSY11} (or \emph{generalized kernelization}~\cite{BodlaenderDFH09}) for a parameterized problem $P$ is a polynomial algorithm that maps each instance $(x, k)$ of $P$ with the input $x$ and the parameter $k$ into
to an instance $(x', k')$ of some parameterized problem $Q$ such that
i) $(x, k)$ is a yes-instance of $P$ if and only if $(x', k')$ is a yes-instance of $Q$,
ii) the size of $x'$ is bounded by $f(k)$ for a computable function $f$, and
iii) $k'$ is bounded by $g(k)$ for a computable function $g$.
The output $(x', k')$ is called a \emph{bi-kernel} (or \emph{generalized kernel}) of the considered problem.
The function $f$ defines the size of a bi-kernel and the \emph{bi-kernel has polynomial size} if the function $f$ is polynomial.
If $Q=P$, then bi-kernel is called \emph{kernel}. Note that if $Q$ is in NP and $P$ is NP-complete, then the existence of a polynomial bi-kernel implies that $P$ has a polynomial kernel because there exists a polynomial reduction of $Q$ to $P$.
A \emph{polynomial compression} of a parameterized problem $P$ into a (nonparameterized) problem $Q$ is a polynomial algorithm that takes as an input an instance $(x,k)$ of $P$ and returns an instance $x'$ of $Q$ such that i) $(x, k)$ is a yes-instance of $P$ if and only if $x'$ is a yes-instance of $Q$,
ii) the size of $x'$ is bounded by $p(k)$ for a polynomial $p$.
For further details on parameterized complexity we refer to \cite{CyganFKLMPPS15,DowneyF13}.



\section{Parameterized complexity of Strong F-closure}
\label{sec:strong}

In this section we give a series of lemmata, which together lead to the conclusion that {\sc Strong $F$-closure} is FPT when parameterized by $k$. Observe that in our definition of the problem, $F$ is a fixed graph of constant size. However, the results of this section allow us to also take the size of $F$ as a parameter, making the results more general. We start by making some observations that will rule out some simple types of graphs as $F$.

\begin{observation}\label{obs:empty}
Let $p$ be a positive integer.
A graph $G$ has a spanning subgraph $H$ satisfying the $pK_1$-closure if and only if $G$ is $pK_1$-free, and if $G$ is $pK_1$-free, then every spanning subgraph $H$ of $G$ satisfies the $pK_1$-closure.
\end{observation}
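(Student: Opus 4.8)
The plan is to unwind the definition of the $F$-closure in the special case $F=pK_1$ and observe that, because $pK_1$ is edgeless, the condition collapses to a statement purely about independent sets of size $p$. Recall that for a vertex set $S$, $H[S]\simeq pK_1$ holds exactly when $|S|=p$ and $S$ is independent in $H$, and likewise $G[S]\simeq pK_1$ holds exactly when $|S|=p$ and $S$ is independent in $G$. So a spanning subgraph $H$ of $G$ satisfies the $pK_1$-closure if and only if there is no $p$-element set $S\subseteq V(G)$ that is independent in both $H$ and $G$.

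First I would handle the direction where $G$ is $pK_1$-free. Then $G$ has no independent set on $p$ vertices at all, so there is no $S$ with $G[S]\simeq pK_1$; the defining condition of the $pK_1$-closure is therefore vacuously satisfied by \emph{every} spanning subgraph $H$ of $G$. This simultaneously proves the ``only if'' part of the displayed equivalence (every spanning subgraph works, in particular one exists, e.g.\ $H=G$) and the trailing claim that every spanning subgraph satisfies the $pK_1$-closure.

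For the converse, suppose $G$ is not $pK_1$-free, i.e.\ $G$ contains an independent set $S$ with $|S|=p$. Let $H$ be any spanning subgraph of $G$. Since $H$ is spanning we have $V(H)=V(G)\supseteq S$, and since $E(H)\subseteq E(G)$, the set $S$ is independent in $H$ as well. Hence $H[S]\simeq pK_1$ while $G[S]\simeq pK_1$, witnessing that $H$ violates the $pK_1$-closure. Therefore no spanning subgraph of $G$ satisfies the $pK_1$-closure, which is the contrapositive of the remaining implication.

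The only point needing care — hardly an obstacle — is to invoke the two defining features of a spanning subgraph correctly: $V(H)=V(G)$ ensures the witnessing set $S$ still lies inside $H$, and $E(H)\subseteq E(G)$ ensures that independence in $G$ is inherited by $H$. The degenerate case $p=1$, where $pK_1$-freeness forces $G$ to have no vertices, is handled by exactly the same argument without modification.
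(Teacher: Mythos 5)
Your proof is correct, and it follows exactly the straightforward unwinding of the definition that the paper leaves implicit (the observation is stated there without proof): since $pK_1$ is edgeless, independence of a $p$-set in $G$ is inherited by any spanning subgraph $H$ via $E(H)\subseteq E(G)$, so a witness in $G$ defeats every $H$, while $pK_1$-freeness of $G$ makes the closure condition vacuous for every $H$. Nothing is missing.
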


By combining Observation~\ref{obs:empty} and the well known result that \textsc{Independent Set} is \W{1}-hard
when parameterized by the size of the independent set~\cite{DowneyF13}, we obtain the following:

\begin{proposition}\label{prop:empty}
For a positive integer $p$, \textsc{Strong $pK_1$-closure} can be solved in time $n^{\Oh(p)}$, and it is \cW{1}-hard for $k\geq 0$ when parameterized by $p$.
\end{proposition}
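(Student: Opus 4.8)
The plan is to derive both parts of the statement directly from Observation~\ref{obs:empty}, which reduces everything about \textsc{Strong $pK_1$-closure} to questions about $pK_1$-freeness, i.e., about independent sets of size $p$. For the algorithmic part, I would first note that an instance $(G,k)$ is a yes-instance if and only if $G$ is $pK_1$-free and $k\le |E(G)|$. Indeed, if $G$ is $pK_1$-free then by Observation~\ref{obs:empty} every spanning subgraph of $G$ satisfies the $pK_1$-closure, so $H=G$ is the best possible choice and we only need $|E(G)|\ge k$; conversely, if $G$ is not $pK_1$-free then by the same observation \emph{no} spanning subgraph of $G$ satisfies the $pK_1$-closure, not even the edgeless one, so $(G,k)$ is a no-instance regardless of $k$. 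Testing whether $G$ is $pK_1$-free is exactly testing whether $G$ has an independent set of size $p$, which brute force settles in time $n^{\Oh(p)}$ by going over all $p$-element subsets of $V(G)$; comparing $k$ with $|E(G)|$ is linear. This yields the claimed $n^{\Oh(p)}$ bound.

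For the hardness part, I would observe that, by Observation~\ref{obs:empty}, the restriction of \textsc{Strong $pK_1$-closure} to instances with $k=0$ is precisely the problem of deciding whether the input graph $G$ is $pK_1$-free, parameterized by $p$. This is nothing but the complement of \textsc{Independent Set} parameterized by the size of the sought independent set, and the latter is \W{1}-hard~\cite{DowneyF13}. Hence the identity map $(G,p)\mapsto((G,0),p)$ is a parameterized reduction from \textsc{Independent Set} to the complement of \textsc{Strong $pK_1$-closure}, which establishes that \textsc{Strong $pK_1$-closure} parameterized by $p$ is \cW{1}-hard, already when $k=0$. If one wants the statement for an arbitrary fixed $k$, the same argument goes through after replacing $G$ by $G+kK_2$ and $p$ by $p+k$, since this adds exactly $k$ to both the independence number and the number of edges.

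I do not expect a genuine obstacle here: once Observation~\ref{obs:empty} is available, both directions are essentially immediate. The only point that requires a moment of care is the boundary case $k=0$ in the reduction — one must explicitly invoke the ``only if'' direction of Observation~\ref{obs:empty} to see that the edgeless spanning subgraph also fails to satisfy the $pK_1$-closure when $G$ contains a $pK_1$, so that the $k=0$ instance is a yes-instance \emph{exactly} when $G$ is $pK_1$-free, rather than being trivially satisfiable.
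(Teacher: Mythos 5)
Your proposal is correct and is essentially the paper's own argument: the paper gives no separate proof, simply combining Observation~\ref{obs:empty} with the \W{1}-hardness of \textsc{Independent Set}, which is exactly what you spell out (yes-instance iff $G$ is $pK_1$-free and $k\le|E(G)|$, brute-force independence test in $n^{\Oh(p)}$, and the $k=0$ restriction being the complement of \textsc{Independent Set}). Your extra $G+kK_2$ padding for arbitrary fixed $k$ is also sound, though not needed for the statement as the paper intends it.
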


Using Proposition~\ref{prop:empty}, we assume throughout the remaining parts of the paper that every considered graph $F$ has at least one edge. We have another special case $F=pK_1+K_2$.

\begin{proposition}\label{prop:one-edge}
For a nonnegative integer $p$, \textsc{Strong $(pK_1+K_2)$-closure} can be solved in time $n^{\Oh(p)}$, and it is \cW{1}-hard for $k\geq 1$ when parameterized by $p$.
\end{proposition}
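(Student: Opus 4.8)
The plan is to reduce the whole question, for $F=pK_1+K_2$, to a simple per-edge condition. Call an edge $uv$ of $G$ \emph{good} if $G$ has no induced copy of $pK_1+K_2$ whose unique edge is $uv$; equivalently, $uv$ is good iff $\alpha\bigl(G-(N_G[u]\cup N_G[v])\bigr)\leq p-1$. The key structural claim is: a spanning subgraph $H$ of $G$ satisfies the $F$-closure if and only if every edge of $H$ is good. For one direction, if $uv\in E(H)$ is not good, pick a witnessing independent set $W$ of size $p$ in $G-(N_G[u]\cup N_G[v])$; then $S=\{u,v\}\cup W$ satisfies $G[S]\simeq pK_1+K_2$ and, since $E(H)\subseteq E(G)$ and $uv\in E(H)$, also $H[S]\simeq pK_1+K_2$, so $H$ fails the $F$-closure. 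Conversely, if $H$ fails the $F$-closure via a set $S$, then $H[S]$ is a copy of $pK_1+K_2$, hence contains exactly one edge $uv\in E(H)$, and the remaining $p$ vertices of $S$ form an independent set of $G$ avoiding $N_G[u]\cup N_G[v]$ (this uses $G[S]\simeq pK_1+K_2$, so that $uv$ is the unique edge of $G[S]$ too), witnessing that $uv$ is not good. In particular the maximum number of strong edges equals the number of good edges of $G$ (take $H$ to be precisely the set of good edges, and recall Observation~\ref{obs:subgraph}), so the instance $(G,k)$ is positive iff $G$ has at least $k$ good edges --- and trivially so if $k=0$.

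This immediately gives the algorithm: there are $\Oh(n^2)$ edges, and for each edge $uv$ we decide whether $\alpha\bigl(G-(N_G[u]\cup N_G[v])\bigr)\geq p$ by checking all $p$-element vertex subsets, in time $n^{\Oh(p)}$; we then compare the number of good edges with $k$. The overall running time is $n^{\Oh(p)}$.

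For the hardness I would reduce from the complement of \textsc{Independent Set} parameterized by the requested solution size $t$, which is $\cW{1}$-hard because \textsc{Independent Set} parameterized by $t$ is $\W{1}$-hard~\cite{DowneyF13}. Given $(G',t)$ with $t\geq 2$, let $G$ be the disjoint union of $G'$ with a star $D$ on vertex set $\{a,b,z_1,\dots,z_t\}$ whose centre is $a$ (so the edges of $D$ are $ab$ and $az_1,\dots,az_t$); set $p=t$ and $k=1$. Since $\alpha(D)=t+1\geq p$, for every edge $uv\in E(G')$ the graph $G-(N_G[u]\cup N_G[v])$ contains all of $D$ and therefore an independent set of size $p$, so no edge of $G'$ is good. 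On the other hand, for every edge $e$ of $D$ we have $N_G[a]=V(D)$, so deleting the closed neighbourhoods of the endpoints of $e$ from $G$ leaves exactly $G'$; hence $e$ is good iff $\alpha(G')\leq p-1=t-1$, i.e.\ iff $G'$ has no independent set of size $t$. Thus $G$ has a good edge --- equivalently, a spanning subgraph satisfying the $F$-closure with at least one edge --- iff $G'$ has no independent set of size $t$, and the reduction is an FPT reduction since $p=t$. For an arbitrary fixed $k\geq 1$ one first pads $G'$ with $k$ isolated vertices and replaces $t$ by $t+k$; this preserves the ``no independent set of size $t$'' answer, and now $G$ has either $t+k+1\geq k$ good edges or none.

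The only genuinely delicate point is the structural claim of the first paragraph: it works precisely because $pK_1+K_2$ has a single edge, so every induced copy of it inside $H$ is anchored at one strong edge, the other $p$ vertices merely imposing an ``independent and non-adjacent'' condition on that edge; this decouples the $F$-closure into per-edge constraints, after which the algorithm is immediate and the reduction only needs a gadget $D$ with $\alpha(D)>p$ and a vertex $a$ dominating $D$. Everything else is routine.
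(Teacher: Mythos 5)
Your proposal is correct and follows essentially the same route as the paper: the same per-edge characterization (an edge is usable iff deleting the closed neighbourhoods of its endpoints leaves no independent set of size $p$) yields the same $n^{\Oh(p)}$ greedy algorithm, and your hardness argument is the paper's disjoint-union reduction from (the complement of) \textsc{Independent Set}, only with a single star with a dominating centre in place of the paper's two joined copies of $K_{1,p}$. The only addition is that you explicitly handle arbitrary fixed $k\geq 1$ by padding, which the paper leaves implicit.
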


\begin{proof}
Let $F=pK_1+K_2$. If $p=0$, then $(G,k)$ is a yes-instance of \textsc{Strong $F$-closure} if and only if $k=0$. Assume that $p\geq 1$. Let $H$ be a spanning subgraph of $G$. Notice that $H$ satisfies the $F$-closure if and only if for every edge $uv$ of $H$, $G-N[\{u,v\}]$ has no independent set of size $p$.

This observation implies that to find a spanning subgraph $H$ of $G$ satisfying the $F$-closure, we can use the following procedure: for every edge $uv\in E(G)$, we check whether $G-N[\{u,v\}]$ has an independent set with  $p$ vertices, and then if this holds, we discard $uv$, and we include $uv$ in the set of edges of $H$ otherwise.  Clearly, it can be done in time $n^{\Oh(p)}$.

To show hardness, we reduce \textsc{Independent Set}. For simplicity, we prove the claim for $k=1$.
Let $(G,p)$ an instance of \textsc{Independent Set}. Let $Q$ be the graph obtained from two copies of the star $K_{1,p}$ by making their central vertices $u$ and $v$ adjacent. We define $G'=G+Q$. We claim that $G'$ has a spanning subgraph $H$ satisfying the $F$-closure that has exactly one edge if and only if $G$ has no independent set with $p$ vertices.
Suppose that $G$ has no independent set with $p$ vertices. Then the spanning subgraph $H$ of $G$ with $E(H)=\{uv\}$ satisfies the $F$-closure. Assume now that $H$ is a spanning subgraph of $G$ with $E(H)=\{xy\}$. Suppose that $xy\neq uv$. Assume without loss of generality that $u$ is not an end-vertex of $xy$. Then $u$ has $p$ neighbors distinct from $v$ that form an independent set in $G'-N[\{x,y\}]$ but this contradicts the property that $H$ satisfies the $F$-closure. Hence, $xy=uv$. Then $G=G'-N[\{u,v\}]$ has no independent set with $p$ vertices.
By Observation~\ref{obs:subgraph}, we have that $(G,p)$ is a no-instance of \textsc{Independent Set} if and only if $(G',k)$ is a yes-instance of \textsc{Strong $F$-closure}.
\end{proof}

From now on we assume that $F\neq pK_1$ and $F \neq pK_1+K_2$. We show that \textsc{Strong $F$-closure} is FPT when parameterized by $k$ and $|V(F)|$ in this case. We will consider separately the case when $F$ has a connected component with at least 3 vertices and the case $F=pK_1+qK_2$ for $p\geq 0$ and $q\geq 2$.

\begin{lemma}\label{lem:bigcomp}
Let $F$ be a graph that has a connected component with at least 3 vertices. Then  \textsc{Strong $F$-closure} can be solved in time $2^{\Oh(k^2)}(|V(F)|+k)^{\Oh(k)} + n^{\Oh(1)}$.
\end{lemma}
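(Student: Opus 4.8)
The plan is to set up a bounded search tree. Observe first that since $F$ has a connected component $C$ with at least three vertices, a matching of $G$ always satisfies the $F$-closure: no connected component of a matching has three vertices, so $H[S]\simeq F$ is impossible for any $S$. Hence, by Observation~\ref{obs:subgraph}, if $\mu(G)\ge k$ we immediately output \textbf{yes}. So we may assume $\mu(G) < k$, and a maximum matching $M$ can be computed in polynomial time. Since $\mu(G)<k$, the set $W=V(M)$ has fewer than $2k$ vertices, and $V(G)\setminus W$ is an independent set (maximality of $M$). The key structural consequence is that every edge of $G$ has at least one endpoint in $W$, so $|W|\le 2k-1$ bounds the ``core'' of the instance: any solution $H$ with $k$ edges lives almost entirely on a vertex set of size roughly $k + 2k = O(k)$, since each of the $k$ edges of $H$ contributes at most one new vertex outside $W$.

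The algorithm is a branching procedure that builds the strong-edge set $E(H)$ incrementally while maintaining a partial assignment of edges to ``strong'' and ``weak''. At each step, if the currently chosen strong edges already form an $F$-graph — i.e.\ there is $S$ with (current $H$)$[S]\simeq F$ and $G[S]\simeq F$ — we must destroy it by marking at least one of the $|E(F)|$ edges of that copy as weak; we branch on which one. This is the standard destroy-a-forbidden-subgraph branching, with branching factor $|E(F)|\le |V(F)|^2$ and depth bounded by the number of weak edges we are allowed to designate. The subtlety, and the reason the running time in the statement has the shape $2^{O(k^2)}(|V(F)|+k)^{O(k)}$, is that we also need to \emph{guess the relevant vertex set} first: we cannot afford to keep the whole graph. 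Concretely, I would first argue that a yes-instance has a solution $H$ whose edges all lie within a vertex set $U\supseteq W$ with $|U|\le |W|+k = O(k)$ (each strong edge needs at most one endpoint outside $W$); enumerate candidate cores by: keep all of $W$, and choose up to $k$ additional vertices from $V(G)\setminus W$. Naively there are $n^{O(k)}$ such choices, but we only care about the \emph{isomorphism type} of $G[U]$ together with which vertices are in $W$ — and since $G[U]$ has $O(k)$ vertices but could still have many non-edges, what actually matters is the adjacency pattern restricted to pairs both of whose endpoints we'll ever touch; bounding the number of ``non-equivalent'' cores by $2^{O(k^2)}(|V(F)|+k)^{O(k)}$ is where the main work lies. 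Once the core is fixed, we run the destroy-an-$F$-graph branching inside it to decide whether $k$ strong edges can be chosen; checking $G[S]\simeq F$ uses that $|V(F)|$ is a parameter.

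In more detail, the steps in order: (1) if $\mu(G)\ge k$, return yes; else compute $M$, set $W=V(M)$, $|W|<2k$. (2) Argue any $k$-edge solution can be assumed to use only vertices of $W$ together with at most $k$ chosen neighbours; reduce to deciding, over all relevant cores $G'=G[U]$ with a marked subset $W$, whether $G'$ has a $k$-edge spanning subgraph satisfying the $F$-closure where additionally every edge meets $W$ — here I must be careful that an $F$-graph of $H$ in the core is an $F$-graph in $G$, which needs $G[S]\simeq F$ to be decidable from $G'$: since $F$ is connected on its big component and $S$ must induce something with a component of $\ge 3$ vertices, all of $S$ within a single strong-edge component lies in $U$, so $G[S]=G'[S]$. (3) For each core, branch: while the chosen strong edges contain an $F$-graph $H[S]$, branch over the $|E(F)|$ edges of $H[S]$ to mark weak; the number of weak markings is at most the number of edges of $G'$ not in the final $H$, which is $O(k^2)$ since $G'$ has $O(k)$ vertices, giving depth $O(k^2)$ and a $|E(F)|^{O(k^2)}\le(|V(F)|+k)^{O(k^2)}$ search tree — to hit the claimed bound one interleaves this with the fact that only $O(k)$ edges are ever \emph{added} as strong, so the real branching parameter is the number of strong-edge choices, $O(k)$, each from $O(k^2)$ candidate edges, yielding the $(|V(F)|+k)^{O(k)}$ factor, while the $2^{O(k^2)}$ absorbs the core enumeration. (4) Return yes iff some branch in some core reaches $k$ strong edges; correctness follows from Observation~\ref{obs:subgraph} (we never need more than $k$ strong edges) and the exhaustiveness of the branching.

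The main obstacle I anticipate is Step~(2): pinning down precisely which bounded-size subgraph of $G$ to work with so that ``$H$ satisfies the $F$-closure in $G$'' is equivalent to a condition checkable inside that subgraph. The cleanest route uses that $F$ has a component on $\ge 3$ vertices, so the only $F$-graphs that can arise from a sparse strong-edge set $H$ are confined to the bounded core (any component of $H$ touching a vertex outside $W$ is small and its $G$-closure only involves neighbours we can also bring into the core), but making this airtight — especially handling the $pK_1+qK_2$-type leftover components of $F$, which is why the lemma statement explicitly quarantines the case $F=pK_1+qK_2$ to be treated separately — will require care, and is I expect where the bulk of the formal proof goes.
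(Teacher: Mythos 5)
There is a genuine gap, and it sits exactly at the two places you yourself flag as ``where the main work lies.'' First, your step (2) cannot be made airtight as stated: the lemma only requires that \emph{one} component of $F$ have at least three vertices, so $F$ may be, e.g., $K_3+pK_1$ or $K_3+K_2+pK_1$ (your remark that the $pK_1+qK_2$-type parts are ``quarantined'' by the statement misreads it -- only the case where $F$ has \emph{no} component on three vertices is deferred, to Lemma~\ref{lem:pK1qK2}). If $H[S]\simeq F$ and $F$ has isolated vertices, the vertices of $S$ matching those isolated vertices need not touch any strong edge and can lie anywhere in $G$; whether $G[S]\simeq F$ then depends on adjacencies between those far-away vertices, the core, and each other. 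So the $F$-closure property is not decidable inside any $O(k)$-vertex core $U$, and the sentence ``all of $S$ within a single strong-edge component lies in $U$'' covers only part of $S$. Second, the bound $2^{O(k^2)}(|V(F)|+k)^{O(k)}$ on the number of ``non-equivalent cores'' is asserted, not proved; the naive enumeration is $n^{O(k)}$, which is incompatible with the purely additive $n^{O(1)}$ term in the claimed running time. The branching analysis in step (3) is also muddled (depth $O(k^2)$ versus $O(k)$), but that is secondary: once the instance has bounded size, plain brute force over $k$-edge subsets suffices.

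The paper resolves both difficulties simultaneously by kernelizing instead of guessing cores. A reduction rule keeps at most $|V(F)|+k$ vertices from every class of false twins; its soundness uses precisely that among $|V(F)|+k+1$ twins, some twin isolated in $H$ avoids any offending set $S$, so the isolated-vertex part of a strong copy of $F$ can be re-routed to a surviving twin -- this is what makes the closure condition checkable in the reduced graph despite isolated vertices of $F$. Then, as in your step (1), a maximal matching of size less than $k$ gives a vertex cover $X$ with $|X|\le 2k-2$; the independent set outside $X$ falls into at most $2^{|X|}$ neighborhood (twin) classes, each now of size at most $|V(F)|+k$, yielding a kernel on at most $2^{2k-2}(|V(F)|+k)+2k-2$ vertices, and brute force over all $k$-edge subsets gives the stated time. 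To repair your outline you would essentially have to prove this twin-counting statement anyway, both to bound the number of distinct ``cores'' and to certify that enough outside vertices exist to complete (or fail to complete) a copy of $F$; at that point you have reproduced the paper's kernel and the destroy-a-copy branching is no longer needed.
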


\begin{proof}
We show the claim by proving that the problem has a kernel with at most $2^{2k-2}(|V(F)|+k)+2k-2$ vertices.
Let $(G,k)$ be an instance of \textsc{Strong $F$-closure}. We recursively apply the following reduction rule in $G$:
\begin{Rulel}\label{rule:twins}
If there are at least $|V(F)|+k+1$ false twins in $G$, then  remove one of them.
\end{Rulel}

To show that the rule is sound, let $v_1,\ldots,v_{p}$ be false twins of $G$ for $p=\max\{|V(F)|,k\}+1$ and assume that $G'$ is obtained from $G$ by deleting $v_p$. We claim that $(G,k)$ is a yes-instance of \textsc{Strong $F$-closure} if and only if $(G',k)$ is a yes-instance.

Let $(G,k)$ be a yes-instance. By Observation~\ref{obs:subgraph},  there is a solution $H$ for $(G,k)$ such that $|E(H)|=k$. Since $|E(H)|=k$, there is $i\in\{1,\ldots,p\}$ such that $v_i$ is an isolated vertex of $H$. Since $v_1,\ldots,v_p$ are false twins we can assume without loss of generality that  $i=p$. Then $H'=H-v_p$ is a solution for $(G',k)$, that is, this is a yes-instance.
Assume that $(G',k)$ is a yes-instance of \textsc{Strong $F$-closure}. Let $H'$ be a solution for the instance with $k$ edges.
Denote by $H$ the spanning subgraph of $G$ with $E(H)=E(H')$. We show that $H$ satisfies the $F$-closure with respect to $G$. To obtain a contradiction, assume that there is a set of vertices $S$ of $G$ such that $H[S]\simeq F$ and $G[S]\simeq F$. Since $H'$ satisfies the $F$-closure with respect to $G$, $v_p\in S$. Note that $v_p$ is an isolated vertex of $H$. Because $p=|V(F)|+k+1$, there is $i\in \{1,\ldots,p-1\}$ such that $v_i$ is an isolated vertex of $H$ and $v_i\notin S$. Let $S'=(S\setminus\{v_p\})\cup \{v_i\}$. Since $v_i$ and $v_p$ are false twins,
$H[S']=H'[S']\simeq F$ and $G[S']\simeq F$; a contradiction. Therefore, we conclude that  $H$ satisfies the $F$-closure with respect to $G$, that is, $H$ is a solution for $(G,k)$.

It is straightforward to see that the rule can be applied in polynomial time.
To simplify notations, assume that $(G,k)$ is the instance of \textsc{Strong $F$-closure} obtained by the exhaustive application of Rule~\ref{rule:twins}.
We greedily find an inclusion maximal matching $M$ in $G$. Notice that the spanning subgraph $H$ of $G$ with $E(H)=M$ satisfies the $F$-closure because every component of $H$ has at most two vertices and by the assumption of the lemma $F$ has a component with at least 3 vertices. Therefore, if $|M|\geq k$, we have that $H$ is a solution for the instance. Respectively, we return $H$ and stop.

Assume that $|M|\leq k-1$. Let $X$ be the set of end-vertices of the edges of $M$. Clearly, $|X|\leq 2k-2$ and $X$ is a vertex cover of $G$.
Let $Y=V(G)\setminus X$. We have that $Y$ is an independent set. Every vertex in $Y$ has its neighbors in $X$. Hence, there are at most $2^{|X|}$ vertices of $Y$ with pairwise distinct neighborhoods. Hence, the vertices of $Y$ can be partitioned into at most $2^{|X|}$ classes of false twins. After applying Rule~\ref{rule:twins}, each class of false twins has at most $|V(F)|+k$ vertices. It follows that $|Y|\leq 2^{|X|}(|F(V)|+k)$ and
$$|V(G)|=|X|+|Y|\leq |X|+2^{|X|}(|F(V)|+k)\leq (2k-2)+2^{2k-2}(|F(V)|+k).$$

Now we can find a solution for $(G,k)$ by brute force checking all subsets of edges of size $k$ by Observation~\ref{obs:subgraph}. This can be done it time $|V(G)|^{\Oh(k)}$. Hence, the total running time is  $2^{\Oh(k^2)}(|V(F)|+k)^{\Oh(k)} + n^{\Oh(1)}$.
\end{proof}

Now we consider the case $F=pK_1+qK_2$ for $p\geq 0$ and $q\geq 2$. First, we explain how to solve  \textsc{Strong $qK_2$-closure} for $q\geq 2$.
We use the random separation technique proposed by Cai, Chen and Chan~\cite{CCC06} (see also \cite{CyganFKLMPPS15}).
To avoid dealing with randomized algorithms and subsequent standard derandomization we use the following lemma stated in~\cite{ChitnisCHPP16}. 

\begin{lemma}[\cite{ChitnisCHPP16}]\label{lem:derand}
Given a set $U$ of size $n$ and integers $0\leq a,b\leq n$, one can construct in time $2^{\Oh(\min\{a,b\}\log (a+b))}\cdot n\log n$ a family $\mathcal{S}$ of at most  $2^{\Oh(\min\{a,b\}\log (a+b))}\cdot \log n$ subsets of $U$ such that the following holds: for any sets $A,B\subseteq U$, $A\cap B=\emptyset$, $|A|\leq a$, $|B|\leq b$, there exists a set $S\in \mathcal{S}$ with $A\subseteq S$ and $B\cap S=\emptyset$.
\end{lemma}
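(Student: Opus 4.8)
The plan is a two-level construction: first shrink the ground set $U$ with a splitter, then enumerate exhaustively on the shrunken set. Up front, note that we may assume $a=\min\{a,b\}$, i.e.\ $a\le b$: if instead $b<a$, build the family for the swapped parameters $(b,a)$ and replace every set $S$ by $U\setminus S$; this interchanges the roles of $A$ and $B$ while preserving all the size bounds, since they depend only on $\min\{a,b\}$. The case $\min\{a,b\}=0$ is trivial ($\mathcal{S}=\{\emptyset\}$, or $\mathcal{S}=\{U\}$, works), so assume $1\le a\le b$.

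For the first level I would invoke the classical polynomial-size splitter / perfect hash family construction of Naor--Schulman--Srinivasan (as presented in \cite{CyganFKLMPPS15}): in time $(a+b)^{\Oh(1)}\cdot n\log n$ one obtains a family $\mathcal{H}$ of $(a+b)^{\Oh(1)}\log n$ functions $h\colon U\to[r]$ with $r=\Oh((a+b)^2)$ such that every $Z\subseteq U$ with $|Z|\le a+b$ is mapped injectively by some $h\in\mathcal{H}$. The hypothesis $|B|\le b$ enters here only to guarantee $|A\cup B|\le a+b$, so that for any admissible pair $(A,B)$ there is an $h\in\mathcal{H}$ that is injective on $A\cup B$; in particular $h(A)\cap h(B)=\emptyset$.

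The second level is brute force on the reduced universe. Let $\mathcal{P}$ be the family of all subsets of $[r]$ of size at most $a$, so $|\mathcal{P}|=\sum_{i=0}^{a}\binom{r}{i}=2^{\Oh(a\log(a+b))}$, and output $\mathcal{S}=\{\,h^{-1}(P):h\in\mathcal{H},\ P\in\mathcal{P}\,\}$. Then $|\mathcal{S}|\le|\mathcal{H}|\cdot|\mathcal{P}|=2^{\Oh(a\log(a+b))}\log n=2^{\Oh(\min\{a,b\}\log(a+b))}\log n$, and the whole computation (build $\mathcal{H}$, then list all preimages $h^{-1}(P)$) fits within the stated time bound. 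Correctness is a one-line check: given disjoint $A,B$ with $|A|\le a$ and $|B|\le b$, pick $h\in\mathcal{H}$ injective on $A\cup B$ and put $P=h(A)$; then $|P|\le a$ so $P\in\mathcal{P}$, we have $A\subseteq h^{-1}(h(A))=h^{-1}(P)$, and no element of $B$ maps into $h(A)$ by injectivity of $h$ on $A\cup B$, so $B\cap h^{-1}(P)=\emptyset$; thus $S=h^{-1}(P)\in\mathcal{S}$ has the required properties.

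The only genuinely nontrivial ingredient is the splitter of the first level: a naive probabilistic/union-bound argument would produce a family of size $2^{\Oh(a+b)}\cdot n^{\Oh(1)}$, paying a polynomial rather than a logarithmic factor in $n$, so the key point is to cite the explicit derandomized perfect hash family with range $\Theta((a+b)^2)$ and $(a+b)^{\Oh(1)}\log n$ functions as a black box. Everything after that — the reduction by complementation, the counting of $|\mathcal{P}|$, and the verification of the separation property — is elementary. (For completeness: in the balanced regime $a\asymp b$ one could alternatively plug in an $(n,a+b)$-universal set and get size $2^{\Oh(a+b)}\log n$, but the construction above already yields the bound claimed in the lemma, so this is not needed.)
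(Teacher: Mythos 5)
Your proposal is correct. Note, however, that the paper itself does not prove this statement at all: it is imported as a black box from \cite{ChitnisCHPP16}, so there is no in-paper proof to compare against. Your argument is a sound self-contained reconstruction along the standard lines used in that reference: the complementation step correctly reduces to $a=\min\{a,b\}$ (the guarantee and the bounds are symmetric under swapping $(a,b)$ and replacing each $S$ by $U\setminus S$), the $(n,a+b,\Oh((a+b)^2))$-splitter of Naor--Schulman--Srinivasan (as in \cite{CyganFKLMPPS15}) gives $|{\mathcal H}|=(a+b)^{\Oh(1)}\log n$ in time $(a+b)^{\Oh(1)}\cdot n\log n$, and taking preimages $h^{-1}(P)$ over all $P\subseteq[r]$ with $|P|\leq a$ yields $2^{\Oh(a\log(a+b))}\cdot\log n$ sets with the separation property, exactly matching the claimed bounds; the verification that $A\subseteq h^{-1}(h(A))$ and $B\cap h^{-1}(h(A))=\emptyset$ for an $h$ injective on $A\cup B$ is as you say. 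The only cosmetic point is the degenerate regime $a+b>n$ (splitters are stated for $k\leq n$), which is handled by replacing $a+b$ with $\min\{a+b,n\}$ since only injectivity on $A\cup B\subseteq U$ is ever used; this does not affect correctness or the bounds.
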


\begin{lemma}\label{lem:qK2}
For $q\geq 2$,
 \textsc{Strong $qK_2$-closure} can be solved in time $2^{\Oh(k\log k)}\cdot n^{\Oh(1)}$.
\end{lemma}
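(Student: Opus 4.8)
The plan is to reduce to bounded-degree host graphs and then carve out a relevant small part of $G$ by the derandomized random separation of Lemma~\ref{lem:derand}. I would start from the reformulation of the closure condition: since $H$ is a spanning subgraph of $G$, a set $S$ with $H[S]\simeq qK_2$ and $G[S]\simeq qK_2$ forces $H[S]=G[S]$, so $H$ satisfies the $qK_2$-closure if and only if $E(H)$, viewed as a set of edges of $G$, contains no $q$ edges that form an induced matching of $G$. Writing $\mathrm{im}_G(A)$ for the maximum size of an induced matching of $G$ contained in an edge set $A\subseteq E(G)$, the task is thus to find $A\subseteq E(G)$ with $|A|\ge k$ and $\mathrm{im}_G(A)\le q-1$. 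Three easy reductions follow: if $\Delta(G)\ge k$, take $k$ edges incident to a vertex of maximum degree — they pairwise share that vertex, so $\mathrm{im}_G$ of them equals $1<q$ — and answer \textbf{yes}; if $q>k$, every $k$-edge set $A$ has $\mathrm{im}_G(A)\le k<q$, so answer \textbf{yes} iff $|E(G)|\ge k$; and if $n<2k^2$, brute force over all $k$-edge subsets of $G$ in time $|E(G)|^{k}=2^{O(k\log k)}$. From now on assume $q\le k$, $\Delta(G)\le k-1$ and $n\ge 2k^2$.

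By Observation~\ref{obs:subgraph} it is enough to search for a solution $A$ with exactly $k$ edges. Such an $A$ spans a set $P:=V(A)$ with $|P|\le 2k$, and since $\Delta(G)\le k-1$ its boundary $B:=N_G(P)\setminus P$ has size at most $2k(k-1)<2k^2$. Applying Lemma~\ref{lem:derand} with ground set $V(G)$, $a=2k$ and $b=2k^2$ produces, in time $2^{O(k\log k)}\cdot n\log n$, a family $\mathcal S$ of $2^{O(k\log k)}\cdot\log n$ subsets of $V(G)$ such that, whenever the instance is a yes-instance, some $S^\star\in\mathcal S$ satisfies $P\subseteq S^\star$ and $B\cap S^\star=\emptyset$. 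For such an $S^\star$, any connected component of $G[S^\star]$ that meets $P$ is entirely contained in $P$, because every vertex of $P$ has all its $G$-neighbours in $P\cup B$ while $B\cap S^\star=\emptyset$; hence $P$ is a disjoint union of connected components of $G[S^\star]$, each with at most $2k$ vertices.

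Given this, for every $S\in\mathcal S$ I would solve the following restricted problem on $G[S]$: choose a sub-collection $\mathcal D$ of connected components of $G[S]$, each with at most $2k$ vertices and with $\sum_{D\in\mathcal D}|V(D)|\le 2k$, and in each $D\in\mathcal D$ an edge set $A_D\subseteq E(G[D])$ of size at most $k$, so as to maximize $\sum_{D\in\mathcal D}|A_D|$ subject to $\sum_{D\in\mathcal D}\mathrm{im}_G(A_D)\le q-1$. The constraint behaves additively because distinct components of $G[S]$ contain no edge of $G$ between them (such an edge would have both endpoints in $S$ and hence lie in $G[S]$), so $\mathrm{im}_G\big(\bigcup_{D}A_D\big)=\sum_{D}\mathrm{im}_G(A_D)$. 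For a fixed component $D$ with $|V(D)|\le 2k$ one precomputes, for each $j\in\{0,\dots,q-1\}$, the largest $|A_D|$ with $\mathrm{im}_G(A_D)\le j$ by enumerating all edge subsets of $G[D]$ of size at most $k$ — there are at most $\binom{2k}{2}^{k}=2^{O(k\log k)}$ of them — and checking the induced-matching condition of $G$ on each; a knapsack-style dynamic program over the components of $G[S]$, with a vertex budget $2k$ and an induced-matching budget $q-1$, then decides whether total size $k$ is attainable. The algorithm answers \textbf{yes} iff this succeeds for some $S\in\mathcal S$. Correctness in one direction is immediate — any feasible solution of the restricted problem yields a genuine $A$ with $\mathrm{im}_G(A)\le q-1$ — and in the other direction it follows from the existence of $S^\star$ together with $P$ being a union of components of $G[S^\star]$. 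All steps run in $2^{O(k\log k)}\cdot n^{O(1)}$ time.

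The step I expect to be the main obstacle is arranging for the random separation to be cheap: the neighbourhood of an arbitrary $k$-edge set may span $\Theta(n)$ vertices, so random separation is useless on general graphs, and it is exactly the preliminary reduction to $\Delta(G)\le k-1$ (via the star observation) that makes the separator $B$ of size $O(k^2)$, so that $\min\{a,b\}=O(k)$ and Lemma~\ref{lem:derand} produces a family of size $2^{O(k\log k)}$. A secondary subtlety is that an optimal set of strong edges can be spread across many components of $G[S^\star]$; the additivity of $\mathrm{im}_G$ over the components of $G[S]$ is what permits recombining the per-component optima by a polynomial dynamic program.
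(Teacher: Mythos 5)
Your proof is correct and follows essentially the same route as the paper: the star argument reducing to $\Delta(G)\le k-1$, derandomized random separation via Lemma~\ref{lem:derand}, decomposition of the separated part into small components, and a knapsack-style dynamic program combining per-component optima. The only (harmless) deviations are that you separate on vertices ($P=V(A)$ against its neighbourhood $B$) where the paper separates on edges, and that you enumerate arbitrary $\le k$-edge subsets inside each $\le 2k$-vertex component to build an induced-matching profile, where the paper exploits the all-or-nothing structure of the per-component edge sets $A_i$ with a single value $w_i$; both variants stay within the claimed $2^{O(k\log k)}\cdot n^{O(1)}$ bound.
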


\begin{proof}
Let  $(G,k)$ be an instance of \textsc{Strong $qK_2$-closure}.
If $k<q$, then every spanning subgraph $H$ of $G$ with $k$ edges satisfies the $F$-closure, that is, $(G,k)$ is a yes-instance of  \textsc{Strong $F$-closure} if $k\leq |E(G)|$. Assume from now that $q\leq k$.

Suppose that $G$ has a vertex $v$ of degree at least $k$. Let $X$ be the set of edges of $G$ incident to $v$ and consider the spanning subgraph $H$ of $G$ with $E(H)=X$. Since $F=qK_2$  and $q\geq 2$, $H$ satisfies the $F$-closure. Hence, $H$ is a solution for $(G,k)$. We assume that this not the case and $\Delta(G) \le k-1$.

Suppose that $(G,k)$ is a yes-instance. Then by Observation~\ref{obs:subgraph}, there is a solution $H$ with exactly $k$ edges.
Let $A=E(H)$ and denote by $X$ the set of end-vertices of the edges of $A$. Denote by $B$ the set of edges of $E(G)\setminus A$ that have at least one end-vertex in $N[X]$. Clearly, $A\cap B=\emptyset$. We have that $|A|=k$ and because the maximum degree of $G$ is at most $k-1$, $|B|\leq 2k(k-1)(k-2)$. Applying Lemma~\ref{lem:derand} for the universe $U=E(G)$, $a=k$ and $b=2k(k-1)(k-2)$ , we construct in time  $2^{\Oh(k\log k)}\cdot n^{\Oh(1)}$ a family $\mathcal{S}$ of at most  $2^{\Oh(k\log k)}\cdot \log n$ subsets of $E(G)$ such that there exists a set $S\in \mathcal{S}$ with $A\subseteq S$ and $B\cap S=\emptyset$.
For every $S\in\mathcal{S}$, we find (if it exists) a spanning subgraph $H$ of $G$ with $k$ edges such that (i) $E(H)\subseteq S$ and (ii) for every $e_1,e_2\in S$ that are adjacent or have adjacent end-vertices, it holds that either $e_1,e_2\in E(H)$ or $e_1,e_2\notin E(H)$.
By Lemma~\ref{lem:derand}, we have that if  $(G,k)$ is a yes-instance of \textsc{Strong $F$-closure}, then it has a solution satisfying (i) and (ii).
Hence, if we find a solution for some $S\in\mathcal{S}$, we return it and stop and, otherwise, if there is no solution satisfying (i) and (ii) for some $S\in\mathcal{S}$, we conclude that $(G,k)$ is a no-instance.

Assume that $S\in\mathcal{S}$ is given. We describe the algorithm for finding a solution $H$ with $k$ edges satisfying (i) and (ii).
Let $R$ be the set of end-vertices of the edges of $S$. Consider the graph $G[R]$ and denote by $C_1,\ldots,C_r$ its components.
Let $A_i=E(C_i)\cap S$ for $i\in\{1,\ldots,r\}$.

Observe that if $H$ is a solution with $k$ edges satisfying (i) and (ii), then for each $i\in\{1,\ldots,r\}$, either $A_i\subseteq E(H)$ or $A_i\cap E(H)=\emptyset$. It means that we are looking for a solution $H$ such that $E(H)$ is union of some sets $A_i$, that is, $E(H)=\cup_{i\in I}A_i$ for $I\subseteq \{1,\ldots,r\}$.
Let $c_i=|A_i|$ for $i\in \{1,\ldots,r\}$. Clearly, we should have that $\sum_{i\in I}c_i=k$. In particular, it means that if $|A_i|>k$, then the edges of $A_i$ are not in any solution. Therefore, we discard such sets and assume from now that $|A_i|\leq k$ for $i\in\{1,\ldots,r\}$.
For $i\in\{1,\ldots,r\}$, denote by $w_i$ the maximum number of edges in $A_i$ that form an induced matching in $C_i$. Since each $|A_i|\leq k$, the values of $w_i$ can be computed in time $2^k\cdot n^{\Oh(1)}$ by brute force. Observe that for distinct $i,j\in \{1,\ldots,r\}$, the vertices of $C_i$ and $C_j$ are at distance at least two in $G$ and, therefore, the end-vertices of edges of $A_i$ and $A_j$ are not adjacent. It follows, that the problem of finding a solution $H$ is equivalent to the following problem: find $I\subseteq\{1,\ldots,r\}$ such that $\sum_{i\in I}c_i=k$ and $\sum_{i\in I}w_i\leq q$.
It is easy to see that we obtain an instance of a variant of the well known \textsc{Knapsack} problem (see, e.g.,~\cite{KleinbergT06}); the only difference is that we demand  $\sum_{i\in I}c_i=k$ instead of $\sum_{i\in I}c_i\geq k$ as in the standard version. This problem can be solved by the standard dynamic programming algorithm (again see, e.g.,~\cite{KleinbergT06}) in time $\Oh(kn)$.

Since the family $\mathcal{S}$ is constructed in time $2^{\Oh(k\log k)}\cdot n^{\Oh(1)}$ and we consider  $2^{\Oh(k\log k)}\cdot \log n$ sets $S$, we obtain that the total running time is
$2^{\Oh(k\log k)}\cdot n^{\Oh(1)}$.
\end{proof}

We use Lemma~\ref{lem:qK2} to solve \textsc{Strong $(pK_1+qK_2)$-closure}.

\begin{lemma}\label{lem:pK1qK2}
For $p\geq 0$ and $q\geq 2$,
 \textsc{Strong $(pK_1+qK_2)$-closure} can be solved in time $2^{\Oh((k+p)\log(k+p))}\cdot n^{\Oh(1)}$.
\end{lemma}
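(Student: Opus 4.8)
The plan is to reduce \textsc{Strong $(pK_1+qK_2)$-closure} to \textsc{Strong $qK_2$-closure} by first disposing of the isolated vertices of $F$ via a twin-type argument, exactly in the spirit of the reduction rule used in Lemma~\ref{lem:bigcomp}, and then to invoke Lemma~\ref{lem:qK2}. The key structural observation is the following analogue of the characterization used for $F=pK_1+K_2$: a spanning subgraph $H$ of $G$ satisfies the $(pK_1+qK_2)$-closure if and only if for every set $T$ of at most $2q$ vertices that span $qK_2$ in $H$ \emph{and} in $G$, the graph $G-N[T]$ has no independent set of size $p$. As a first step I would record this equivalence and note that, just as before, a single isolated vertex of $H$ that lies outside a hypothetical forbidden set can be swapped for another, so a bounded-size family of ``spare'' isolated vertices of $H$ is all that matters.

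The second step is to bound the search. If $(G,k)$ is a yes-instance then by Observation~\ref{obs:subgraph} it has a solution $H$ with exactly $k$ edges; such an $H$ covers at most $2k$ vertices, so all but at most $2k$ vertices of $G$ are isolated in $H$. The idea is to apply a reduction rule removing a vertex whenever $G$ has more than $|V(F)|+k$ false twins (which now reads $p+2q+k$ of them), sound by the same swap argument as in Lemma~\ref{lem:bigcomp}: an extra false twin can always be chosen isolated in a $k$-edge solution, and conversely a forbidden $pK_1+qK_2$ in the enlarged graph would have to use the reinstated twin as one of its isolated vertices, which can then be replaced by a spare twin outside the configuration, contradicting feasibility in the smaller graph. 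After exhaustive application, either we have found (greedily) a large enough matching — any matching of size $\ge k$ already satisfies the closure since every component of $H$ has at most two vertices while $F$ has $q\ge 2$ components and hence cannot be a subgraph of such an $H$ unless... here one must be a little careful: a matching of $\ge q$ edges \emph{does} contain $qK_2$ as a subgraph, so the correct statement is that a matching $M$ is a solution iff no $q$ of its edges together with their $G$-neighborhoods leave a $p$-independent set; I would instead simply observe that a single high-degree star is a trivial solution, and otherwise the maximum degree is bounded.

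The third step is therefore to branch on the degrees as in Lemma~\ref{lem:qK2}: if $G$ has a vertex of degree $\ge k$, the star at that vertex is a spanning subgraph with $\ge k$ edges and, having $q\ge 2$ and only one nontrivial component, it trivially satisfies the $(pK_1+qK_2)$-closure (a $qK_2$ subgraph needs $q\ge 2$ disjoint edges, impossible in a star); so assume $\Delta(G)\le k-1$. Now guess the set $Z$ of vertices that are the endpoints of the $qK_2$-subgraphs realized inside the $k$-edge solution $H$; since every $qK_2$ in $H$ uses $\le 2q\le 2k$ edges and these are among the $k$ edges of $H$, the relevant configurations are controlled, and one can afford to enumerate, over all $S$ in a separating family from Lemma~\ref{lem:derand} applied with $a=k$ and $b=O(k^3)$ as in Lemma~\ref{lem:qK2}, the candidate edge sets $E(H)\subseteq S$ that are ``closed'' under adjacency inside $S$. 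For each such $S$ the components of $G[R]$ (with $R$ the endpoints of $S$) are pairwise at distance $\ge 2$, so the only interaction between the $qK_2$-count and the $p$-independent-set condition is local; the problem again becomes a knapsack-style selection of components $A_i$ with $\sum_{i\in I} c_i = k$, but now with the additional constraint that, writing $w_i$ for the size of the largest induced matching inside $A_i$, we need both $\sum_{i\in I} w_i < q$ \emph{or} — if $\sum_{i\in I} w_i\ge q$ — that the union of the closed neighborhoods of the endpoints of any $q$-induced-matching selected from $\bigcup_{i\in I}A_i$ kills every independent set of size $p$ in the rest of $G$.

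The main obstacle I anticipate is exactly this last coupling: unlike the pure $qK_2$ case, feasibility is not simply ``$\sum w_i < q$'', because a $qK_2$ \emph{subgraph} of $H$ (not induced) only becomes forbidden when the corresponding induced subgraph of $G$ is also $qK_2$ \emph{and} there is no $p$-element independent set outside its neighborhood. To handle this I would enrich the dynamic program to also track a bounded amount of information about which of the at most $2q$ endpoints of a would-be $qK_2$ have been ``used up'', together with a guess (of size $n^{O(p)}$, or better, after a $p$-sized twin cleanup, of size $f(p,k)$) of the small neighborhood sets that must be forbidden; since $p$ is part of the parameter and $q$ is a constant ($\le k$), the number of such guesses is $2^{O((k+p)\log(k+p))}$, and each resulting knapsack instance is solved in polynomial time, giving the claimed running time $2^{O((k+p)\log(k+p))}\cdot n^{O(1)}$. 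The cleanest presentation is probably to first solve \textsc{Strong $qK_2$-closure} verbatim from Lemma~\ref{lem:qK2}, then show that adding $p$ isolated vertices to $F$ costs only a twin-reduction step bounding the number of ``irrelevant'' vertices by a function of $p+k$, reducing to the case $p=0$ on a graph of size $f(p,k)\cdot n^{O(1)}$ where brute force over $k$-edge subgraphs already suffices.
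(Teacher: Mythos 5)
You correctly identify the crux --- a strong $qK_2$ (induced in both $H$ and $G$) is only forbidden when $G$ minus the closed neighborhood of its endpoints still contains an independent set of size $p$ --- but your proposal never actually resolves this coupling, and the paper's proof dissolves it with one quantitative observation that is missing from your plan. After the high-degree-star step one may assume $\Delta(G)\le k-1$; then, whenever $|V(G)|\ge 2k(k-1)+pk$, the $(pK_1+qK_2)$-closure is \emph{equivalent} to the $qK_2$-closure: any violating induced $qK_2$ on a vertex set $S$ has $|N[S]|\le 2k(k-1)$, so at least $pk$ vertices lie outside $N[S]$, and since the degree is at most $k-1$ a greedy choice among them yields an independent set of size $p$, extending the violation to a $pK_1+qK_2$. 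Hence for large graphs one simply invokes Lemma~\ref{lem:qK2}, and for small graphs ($|V(G)|<2k(k-1)+pk$) brute force over all $k$-edge subsets together with a brute-force closure check runs in $2^{\Oh((k+p)\log(k+p))}$ because $n$ is then polynomial in $k+p$. This large/small dichotomy is the whole content of the lemma beyond Lemma~\ref{lem:qK2}, and it does not appear in your argument.

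Both mechanisms you offer in its place fall short of the claimed running time. The enriched random-separation dynamic program requires ``guessing the small neighborhood sets that must be forbidden'' with $n^{\Oh(p)}$ guesses, which is only XP in $p$, and no argument is given that this guess space shrinks to a function of $p+k$; the independent-set condition is global (it lives in $G-N[T]$, far from the separated components), so it is not captured by the local knapsack structure. Your fallback --- the false-twin reduction followed by ``brute force over $k$-edge subgraphs'' --- also does not go through: the twin rule only bounds the size of each twin class, and the step of Lemma~\ref{lem:bigcomp} that bounds the \emph{number} of classes uses a vertex cover of size at most $2k-2$ obtained from the fact that a large matching is itself a solution, which is precisely the fact you note fails when $F=pK_1+qK_2$. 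Without it the reduced graph can still have $\Omega(n)$ vertices, and brute force over its $k$-edge subsets costs $n^{\Theta(k)}$, which is not FPT. So the proposal has a genuine gap; the missing idea is that in a large graph of maximum degree at most $k-1$ the $pK_1$ part of a violation comes for free, reducing the problem outright to \textsc{Strong $qK_2$-closure}.
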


\begin{proof}
Let $F=pK_1+qK_2$. If $p=0$, we can apply Lemma~\ref{lem:qK2} directly. Assume that $p\geq 1$.
Let $(G,k)$ be an instance of \textsc{Strong $F$-closure}.
If $k<q$, then every spanning subgraph $H$ of $G$ with $k$ edges satisfies the $F$-closure, that is, $(G,k)$ is a yes-instance of  \textsc{Strong $F$-closure} if $k\leq |E(G)|$. Assume from now that $q\leq k$.

Suppose that $G$ has a vertex $v$ of degree at least $k$. Then we argue in exactly the same way as in the proof of Lemma~\ref{lem:qK2}.
We consider the set of edges $X$ incident to $v$ and define $H$ be the spanning subgraph of $G$ with $E(H)=X$.  Since $q\geq 2$, $H$ satisfies the $F$-closure and we have that  $H$ is a solution for $(G,k)$. We assume from now that this not the case and $\Delta(G) \le k-1$.

Suppose that $|V(G)|<2k(k-1)+pk$. In this case we solve \textsc{Strong $F$-closure} by brute force trying all possible subsets $X$ of $k$ edges and checking whether the spanning subgraph $H$ with $E(H)=X$ is a solution. By Observation~\ref{obs:subgraph}, it is sufficient to solve the problem.  To check whether $H$ is a solution, we have to verify whether $H$ satisfies the $F$-closure. We do it by brute force in time $n^{\Oh(|V(F)|)}$. Since $n\leq 2k(k-1)+pk$ and $|V(F)|=p+2q\leq p+2k$, this can be done in time $2^{\Oh((k+p)\log(k+p)}$. Since the number of sets $X$ is  $2^{\Oh((k+p)\log(k+p)}$, the total running time is $2^{\Oh((k+p)\log(k+p)}$.

Assume now that $|V(G)|\geq 2k(k-1)+pk$.

We claim that in this case a spanning subgraph $H$ of $G$ satisfies the $pK_1+qK_2$-closure if and only if
$H$ satisfies the $qK_2$-closure. It is straightforward to see that if $H$ satisfies the $qK_2$-closure, then $H$ satisfies the $pK_1+qK_2$-closure. Suppose that
$H$ does not satisfy the $qK_2$-closure. Then there is $S\subseteq V(G)$ of size $2q$ such that $G[S]=H[S]$ is a matching with $q$ edges. Let $X=V(G)\setminus N[S]$.
Since $\Delta(G) \le k-1$, $|N[S]|\leq 2k(k-1)$ and, therefore, $|X|\geq pk$. It implies that $G[X]$ has an independent set $S'$ of size at least $p$ because the maximum degree is bounded by $k-1$. We have that $G[S\cup S']=H[S\cup S']\simeq pK_1+qK_2$. It means that $H$ does not satisfy the $pK_1+qK_2$-closure.

By the proved claim, we have to solve  \textsc{Strong $qK_2$-closure} and this can be done in time  $2^{\Oh(k\log k)}\cdot n^{\Oh(1)}$ by Lemma~\ref{lem:qK2}.
\end{proof}

Combining Lemmata~\ref{lem:bigcomp}, \ref{lem:qK2}, and \ref{lem:pK1qK2}, we obtain the following theorem.

\begin{theorem}\label{thm:FPT-strong}
If $F\neq pK_1$ for $p\geq 1$ and $F\neq pK_1+K_2$ for $p\geq 0$, then \textsc{Strong $F$-closure} is FPT when parameterized by $|V(F)|+k$.
\end{theorem}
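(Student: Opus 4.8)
The plan is to obtain Theorem~\ref{thm:FPT-strong} as an immediate corollary of the three preceding lemmata, after a short case analysis on the structure of $F$. Since we have already excluded $F = pK_1$ (by Proposition~\ref{prop:empty}) and $F = pK_1 + K_2$ (by Proposition~\ref{prop:one-edge}), the graph $F$ under consideration has at least one edge and is not of the form $pK_1 + K_2$. First I would split into two exhaustive cases according to whether $F$ has a connected component with at least $3$ vertices.

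In the first case, $F$ has a connected component on at least $3$ vertices. Then Lemma~\ref{lem:bigcomp} applies directly and gives an algorithm running in time $2^{\Oh(k^2)}(|V(F)|+k)^{\Oh(k)} + n^{\Oh(1)}$, which is of the form $f(|V(F)|+k)\cdot n^{\Oh(1)}$, so \textsc{Strong $F$-closure} is FPT parameterized by $|V(F)|+k$ in this case. In the second case, every connected component of $F$ has at most $2$ vertices, so $F$ is a disjoint union of isolated vertices and single edges; that is, $F = pK_1 + qK_2$ for some $p \ge 0$ and $q \ge 0$. Here $q \ge 1$ since $F$ has at least one edge, and $q \ne 1$ since otherwise $F = pK_1 + K_2$, which we have excluded. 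Hence $q \ge 2$, and Lemma~\ref{lem:pK1qK2} (or Lemma~\ref{lem:qK2} when $p = 0$) applies, giving an algorithm running in time $2^{\Oh((k+p)\log(k+p))}\cdot n^{\Oh(1)}$; since $p \le |V(F)|$, this is again of the form $f(|V(F)|+k)\cdot n^{\Oh(1)}$, so the problem is FPT in this case as well.

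Since the two cases are exhaustive under the hypothesis $F \neq pK_1$ and $F \neq pK_1 + K_2$, this establishes the theorem. There is no real obstacle here: the entire content of the theorem has been front-loaded into the lemmata, and the only thing to check carefully is that the case split is genuinely exhaustive and that each branch falls under the hypotheses of the appropriate lemma — in particular that excluding $pK_1$ and $pK_1 + K_2$ leaves exactly the two situations ``$F$ has a component on $\ge 3$ vertices'' and ``$F = pK_1 + qK_2$ with $q \ge 2$''. If one wanted to be slightly more careful, one could also remark explicitly that in the second case $F$ could still have $p = 0$, which is precisely the situation covered by Lemma~\ref{lem:qK2} and also permitted ($p \ge 0$) in Lemma~\ref{lem:pK1qK2}, so invoking Lemma~\ref{lem:pK1qK2} alone suffices for the whole second case.
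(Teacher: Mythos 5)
Your proof is correct and takes essentially the same route as the paper, which obtains the theorem by combining Lemmata~\ref{lem:bigcomp}, \ref{lem:qK2}, and \ref{lem:pK1qK2}. Your explicit check that excluding $pK_1$ and $pK_1+K_2$ leaves exactly the cases ``$F$ has a component on at least $3$ vertices'' and ``$F=pK_1+qK_2$ with $q\geq 2$'' is precisely the (implicit) content of that combination.
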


Notice that if $|E(F)|>k$, then $(G,k)$ is a yes-instance of \textsc{Strong $F$-closure}. This immediately implies the following corollary.

\begin{corollary}\label{cor:no-isolates}
If $F$ has no isolated vertices,  then \textsc{Strong $F$-closure} is FPT when parameterized by $k$,  even when $F$ is given as a part of the input.
\end{corollary}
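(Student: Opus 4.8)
The plan is to reduce to Theorem~\ref{thm:FPT-strong} by a case distinction on whether $|E(F)|$ exceeds $k$, exploiting that the absence of isolated vertices in $F$ bounds $|V(F)|$ linearly in $|E(F)|$. First I would dispose of the case $|E(F)| > k$, where the instance is solvable in polynomial time independently of $F$: since $H$ is a spanning subgraph of $G$ we need $|E(G)| \ge k$, and conversely, if $|E(G)| \ge k$, then any spanning subgraph $H$ with exactly $k$ edges has strictly fewer than $|E(F)|$ edges in total, hence contains no subgraph isomorphic to $F$ at all and so vacuously satisfies the $F$-closure; thus $(G,k)$ is a yes-instance if and only if $|E(G)| \ge k$.

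So assume $|E(F)| \le k$. Because $F$ has no isolated vertices, every vertex of $F$ is incident to an edge, whence $|V(F)| \le 2|E(F)| \le 2k$; in particular $|V(F)| + k \le 3k$, so any algorithm that is FPT in $|V(F)| + k$ is FPT in $k$, and since $F$ then has description size $\mathrm{poly}(k)$ this remains true even when $F$ is given as part of the input. It then remains to verify that Theorem~\ref{thm:FPT-strong} is applicable, i.e.\ that $F \ne pK_1$ and $F \ne pK_1+K_2$. The former is clear since $F$ has an edge. For the latter, $F = pK_1 + K_2$ with $p \ge 1$ would give $F$ an isolated vertex, contradicting the hypothesis; the only remaining possibility is $F = K_2$, which I would handle directly — a spanning subgraph satisfies the $K_2$-closure only when it has no edges, so with $1 = |E(F)| \le k$ (hence $k \ge 1$) the instance is a no-instance. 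In every other subcase Theorem~\ref{thm:FPT-strong} yields the claimed FPT algorithm, completing the proof.

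I do not anticipate a genuine obstacle: the corollary is essentially a repackaging of Theorem~\ref{thm:FPT-strong}, and the only care needed is the bookkeeping of the degenerate patterns ($F = K_2$ and, vacuously, $F = pK_1$) so that the hypotheses of that theorem are genuinely met, together with the trivial remark that once $F$ has no isolated vertices and $|E(F)| \le k$, the number of vertices of $F$ is at most $2k$.
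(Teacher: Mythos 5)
Your proposal is correct and follows essentially the same route as the paper, which dismisses the case $|E(F)|>k$ as (essentially) trivial and then invokes Theorem~\ref{thm:FPT-strong} together with the bound $|V(F)|\le 2|E(F)|\le 2k$. You are in fact slightly more careful than the paper's one-line remark: you note that the trivial case really amounts to checking $|E(G)|\ge k$, and you explicitly dispose of the exceptional pattern $F=K_2$ (a no-instance for $k\ge 1$, consistent with Proposition~\ref{prop:one-edge}), which the hypotheses of Theorem~\ref{thm:FPT-strong} exclude.
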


We conclude this section with a kernel result. Observe that if the input graph $G$ is restricted to be a graph from a sparse graph class $\mathcal{C}$, namely if $\mathcal{C}$  is \emph{nowhere dense} (see \cite{NesetrilM12})
and is closed under taking subgraphs, then the kernel constructed in Lemma~\ref{lem:bigcomp} becomes polynomial. This observation is based on the results Eickmeyer et al.~\cite{EickmeyerGKKPRS17}  that allow to bound the number of distinct neighborhoods of vertices in $V(G)\setminus X$ in the construction of the kernel.  For simplicity, we demonstrate it here on $d$-degenerate graphs \footnote{NP-completeness result for $F=P_3$ restricted to planar graphs (and, thus, 5-degenerate graphs) is given in Section~\ref{sec:conc}.}.

\begin{proposition}\label{prop:bigcomp-kern}
If $F$ has a connected component with at least 3 vertices, then  \textsc{Strong $F$-closure} has a kernel with $k^{\Oh(d)}d(|V(F)|+k)$ vertices on $d$-degenerate graphs.
\end{proposition}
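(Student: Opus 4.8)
The plan is to mimic the kernelization in Lemma~\ref{lem:bigcomp}, replacing the crude bound of $2^{|X|}$ on the number of distinct neighborhoods of the independent-set side by a polynomial bound that is available in $d$-degenerate graphs. First I would recall that $d$-degeneracy is hereditary, so every induced subgraph of $G$ is again $d$-degenerate; this is what lets us apply the sparsity bound to the relevant subgraph. As in Lemma~\ref{lem:bigcomp}, I would first apply Rule~\ref{rule:twins} exhaustively (it keeps $G$ $d$-degenerate, since removing a vertex preserves degeneracy), then greedily compute an inclusion-maximal matching $M$; the spanning subgraph with edge set $M$ satisfies the $F$-closure because $F$ has a component with at least three vertices, so if $|M|\ge k$ we are done. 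Otherwise $X=V(M)$ is a vertex cover with $|X|\le 2k-2$, and $Y=V(G)\setminus X$ is an independent set whose neighborhoods all lie in $X$.

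The key new ingredient is the following counting claim: in a $d$-degenerate graph, the number of vertices of $Y$ having pairwise distinct neighborhoods inside a set $X$ of size $t$ is at most $|X|^{\Oh(d)}$, or more precisely $\Oh\!\left(\binom{|X|}{\le d}\right)+|X|$ type bound. The clean way to see this: let $Y'\subseteq Y$ consist of vertices with pairwise distinct neighborhoods in $X$; then $G[X\cup Y']$ is $d$-degenerate, hence has at most $d(|X|+|Y'|)$ edges. Vertices of $Y'$ with at least $d+1$ neighbors in $X$ contribute $(d+1)$ edges each, so there are at most $d(|X|+|Y'|)/(d+1)$ of them; pushing this through gives a linear-in-$|X|$ bound on that part once $|Y'|$ is isolated. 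Vertices of $Y'$ with at most $d$ neighbors in $X$ number at most $\sum_{i=0}^{d}\binom{|X|}{i} = |X|^{\Oh(d)}$, since their neighborhoods are distinct subsets of $X$ of size $\le d$. Combining, $|Y'| = |X|^{\Oh(d)}$, i.e. $Y$ splits into $|X|^{\Oh(d)} = k^{\Oh(d)}$ classes of false twins. After Rule~\ref{rule:twins} each such class has at most $|V(F)|+k$ vertices, so $|Y|\le k^{\Oh(d)}(|V(F)|+k)$ and $|V(G)| \le (2k-2) + k^{\Oh(d)}(|V(F)|+k) = k^{\Oh(d)}\,d\,(|V(F)|+k)$, as claimed. (The extra factor $d$ absorbs the constants hidden in the linear part of the estimate.)

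The soundness of the kernel is exactly Rule~\ref{rule:twins}, already proved, so no new correctness argument is needed; I would just remark that the final instance is obtained in polynomial time, since the matching, the partition of $Y$ into twin classes, and the applications of the rule are all polynomial, and that the output is still a valid \textsc{Strong $F$-closure} instance on a $d$-degenerate graph. The main obstacle I anticipate is pinning down the counting claim cleanly — in particular the correct statement and proof of the ``few distinct small neighborhoods'' bound in degenerate graphs and folding both the ``small neighborhood'' part and the ``large neighborhood'' part into a single bound of the advertised form $k^{\Oh(d)}d(|V(F)|+k)$. One subtlety is that the degeneracy argument must be applied to the subgraph induced by $X$ together with a set of representatives of the twin classes, not to all of $Y$, so that the edge count $d(|X|+|Y'|)$ is genuinely a bound on the representatives; once that is set up correctly, the rest is the routine arithmetic sketched above, which I would not carry out in full here. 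I would also note, matching the remark in the text preceding the proposition, that the same argument works verbatim for any subgraph-closed nowhere dense class using the Eickmeyer et al.\ bound in place of the degeneracy bound, only with a weaker (still polynomial for fixed class) dependence.
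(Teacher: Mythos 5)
Your proposal is correct and follows essentially the same route as the paper: exhaustively apply Rule~\ref{rule:twins}, take an inclusion-maximal matching to obtain a vertex cover $X$ with $|X|\le 2k-2$, and then use $d$-degeneracy to bound the number of distinct neighborhoods of the independent set $Y$ in $X$, with each twin class of size at most $|V(F)|+k$. The only minor difference is in the counting step: the paper bounds the vertices of $Y$ of degree at least $d+1$ directly by a pigeonhole argument producing a forbidden $K_{d+1,d+1}$ subgraph, while you bound the number of their twin-class representatives via the edge count $d(|X|+|Y'|)$ in the degenerate subgraph $G[X\cup Y']$; both give the claimed $k^{\Oh(d)}d(|V(F)|+k)$ bound.
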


\begin{proof}
Let $(G,k)$ be an instance of \textsc{Strong $F$-closure} and $G$ is $d$-degenerate. First, we exhaustively apply Rule~\ref{rule:twins}. To simplify notations, assume that $(G,k)$ is the obtained instance. Then we  find an inclusion maximal matching $M$ in $G$. If $|M|\geq k$, we have that $H$ is a solution for the instance. Respectively, we return $H$ and stop.
Assume that this is not the case, that is, $|M|\leq k-1$. Let $X$ be the set of end-vertices of the edges of $M$. Clearly, $|X|\leq 2k-2$ and $X$ is a vertex cover of $G$.
Let $Y=V(G)\setminus X$. We have that $Y$ is an independent set.

Observe that if $Y$ contains at least $\binom{|X|}{d+1}d+1$ vertices of degree at least $d+1$, then $G$ contains the complete bipartite graph $K_{d+1,d+1}$ as a subgraph contradicting $d$-degeneracy. We conclude that $Y$ contains $d\cdot k^{\Oh(d)}$ vertices of degree at least $d+1$. The number of vertices of degree at most $d$ with pairwise distinct neighborhoods is $k^{\Oh(d)}$. This immediately implies that $G$ has $k^{\Oh(d)}d(|V(F)|+k)$ vertices.
\end{proof}

In particular, we have a polynomial kernel when $F=P_3$.
Similar results can be obtained for some classes of dense graphs. For example, if $G$ is $dK_1$-free, then $V(G)\setminus X$ has at most $d-1$ vertices and we obtain a kernel with $2k+d-3$ vertices.

\section{Parameterized complexity of \textsc{Strong Triadic Closure}}\label{sec:stc}
In this section we study the parameterized complexity of \textsc{Strong $P_3$-closure}, which is more famously known as \textsc{Strong Triadic Closure}.

Note that \textsc{Strong Triadic Closure} is FPT and admits an algorithm with running time $2^{k^2} \cdot n^{\mathcal{O}(1)}$ by Lemma~\ref{lem:bigcomp}. We complement this result by showing that \textsc{Strong Triadic Closure} does not admit a polynomial kernel, even when the input graph is a split graph. A graph is a {\it split graph} if its vertex set can be partitioned into an independent set and a clique. \textsc{Strong Triadic Closure} is known to be NP-hard on split graphs \cite{KP17}.

\begin{theorem}\label{theo:strong:nopolykernel:P3:split}
\textsc{Strong Triadic Closure} has no polynomial compression unless $\textrm{NP} \subseteq \textrm{coNP/ poly}$, even when the input graph is a split graph.
\end{theorem}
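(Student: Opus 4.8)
The plan is to prove that \textsc{Strong Triadic Closure} admits no polynomial compression on split graphs unless $\mathrm{NP}\subseteq\mathrm{coNP/poly}$, by giving an \emph{OR-cross-composition} (in the sense of Bodlaender, Jansen, and Kratsch) from an NP-hard problem into \textsc{Strong Triadic Closure} restricted to split graphs, parameterized by $k=|E(H)|$. Recall that for such a lower bound it suffices to exhibit a polynomial algorithm that, given $t$ instances $I_1,\dots,I_t$ of some NP-hard problem $L$ that all share the same value under a polynomial equivalence relation, produces a single instance $(G,k)$ of \textsc{Strong Triadic Closure} such that $G$ is a split graph, $k$ is bounded by a polynomial in $\max_i|I_i| + \log t$, and $(G,k)$ is a yes-instance if and only if at least one $I_i$ is a yes-instance. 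The natural source problem is \textsc{Strong Triadic Closure} itself on split graphs (which is NP-hard by \cite{KP17}), with the equivalence relation grouping instances by number of vertices, number of edges, and target value $k_i$; one may assume all $t$ instances share the same $n$, $m$, and target $k_0$.

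The key construction step is to build a split graph $G$ whose clique side essentially overlays the clique sides of all $t$ input split graphs, while the independent sets stay disjoint, together with a small ``selector gadget'' of size $O(\log t)$ that forces a solution to effectively commit to one of the $t$ instances. The bound $k$ would then be roughly $k_0$ plus a polynomial overhead coming from the selector gadget and from edges we are always free to take as strong — crucially \emph{independent} of $t$. In a split graph, the strong edges incident to a clique vertex $v$ are heavily constrained: if $v$ has two strong neighbors $x,y$ that are nonadjacent, we get a strong $P_3$ whose three vertices induce a $P_3$ in $G$, which is forbidden; so each clique vertex's strong neighborhood must induce a clique in $G$. This local structure is what lets the overlaid-clique trick work: a global solution decomposes almost instance-by-instance, and the selector gadget, via the forbidden-$P_3$ constraint, pays a large penalty (loses too many potential strong edges) unless all but one instance is ``switched off.'' I would design the gadget so that switching off instance $i$ costs exactly the same for every $i$, making the counting uniform, and so that the number of strong edges achievable is $k_0 + c$ (for a fixed polynomial $c$) precisely when some $I_i$ is a yes-instance with value $k_0$.

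I expect the main obstacle to be the gadget design and the accompanying counting argument: ensuring simultaneously that (i) the composed graph is genuinely a split graph, (ii) the parameter $k$ does not grow with $t$, and (iii) the forbidden-$P_3$ constraint in the selector gadget is tight enough that a solution cannot ``cheat'' by taking a few strong edges from many instances at once rather than committing to one. A cleaner alternative, which I would try first, is to compose from a different NP-hard problem with a more combinatorial flavor — for instance \textsc{Set Cover} or a constraint-satisfaction variant — whose structure maps transparently onto the clique/independent-set split and whose ``OR'' is naturally realized by a logarithmic-size choice gadget; this can sidestep the most delicate part of the counting. In either case the concluding step is standard: an OR-cross-composition into a parameterized problem whose parameter is polynomially bounded rules out a polynomial compression (hence a polynomial kernel) unless $\mathrm{NP}\subseteq\mathrm{coNP/poly}$, by the framework of \cite{BodlaenderDFH09} and the OR-distillation results.
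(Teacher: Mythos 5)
There is a genuine gap: what you give is a plan for an OR-cross-composition, not a proof. The entire technical content of such a composition would be the ``selector gadget'' and the accompanying counting argument, and you explicitly leave both unresolved (your obstacles (i)--(iii)). For a \emph{maximization} problem like \textsc{Strong Triadic Closure} this is not a routine detail: in a composed instance a solution is free to pick up strong edges coming from several of the $t$ instances simultaneously, and the optimum of the composed graph is in general not determined by the maximum over the individual instances. Your claim that the gadget ``pays a large penalty unless all but one instance is switched off'' is exactly the statement that has to be designed and proved, and the local observation you rely on (that the strong neighborhood of a clique vertex must induce a clique in $G$) is far too weak to force instance-by-instance decomposition of a solution once the clique sides of all $t$ instances are overlaid on the same vertices. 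As it stands, the argument could not be checked or completed without inventing the missing construction from scratch.

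The paper takes a different and much lighter route that avoids OR-composition altogether: a polynomial parameter transformation from \textsc{Set Packing} parameterized by the universe size $|\mathcal{U}|=t$, which is known to admit no polynomial compression \cite{DomLS14}. From an instance $(\mathcal{U},B_1,\dots,B_p,k)$ one builds a split graph whose clique $U\cup Y$ has only $2t$ vertices (the independent side $W\cup X$ encodes the sets plus $t$ auxiliary vertices), and shows that $k$ disjoint sets exist iff there is a solution with at least $k'=|E(U\cup Y)|+(k+t)/2$ strong edges; since $k'$ is polynomial in $t$, a polynomial compression for \textsc{Strong Triadic Closure} on split graphs would yield one for \textsc{Set Packing} parameterized by $|\mathcal{U}|$. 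The correctness direction back from a solution $H$ requires nontrivial structural work (showing weak edges inside the clique form stars, and that vertices of $W$ incident to strong edges have pairwise disjoint non-neighborhoods), but it is a single reduction with no need to enforce an OR over many instances. If you want to salvage your approach, you would have to supply the selector gadget and prove the no-cheating property; reducing from an already-compression-hard problem, as the paper does, is the standard way to sidestep precisely the difficulty you identified.
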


\begin{proof}
The reduction comes from the \textsc{Set Packing} problem: given a universe $\mathcal{U}$ of $t$ elements and subsets $B_1, \ldots, B_p$ of $\mathcal{U}$ decide whether there are at least $k$ subsets which are pairwise disjoint.
\textsc{Set Packing} (also known as \textsc{Rank Disjoint Set} problem), parameterized by $|\mathcal{U}|$, does not admit a polynomial compression \cite{DomLS14}.
Given an instance $(\mathcal{U}, B_1, \ldots, B_p, k)$ for the \textsc{Set Packing}, we construct a split graph $G$ with a clique $U \cup Y$ and an independent set $W \cup X$ as follows:

\begin{itemize}
\item The vertices of $U$ correspond to the elements of $\mathcal{U}$.
\item For every $B_i$ there is a vertex $w_i \in W$ that is adjacent to all the vertices of $(U \cup Y) \setminus B_i$.
\item $X$ and $Y$ contain additional $2t$ vertices with $X = \{x_1, \ldots, x_t\}$ and $Y = \{y_1, \ldots, y_t\}$
such that $y_i$ is adjacent to all the vertices of $(W \cup X) \setminus \{x_i\}$ and $x_i$ is adjacent to all the vertices of $(U \cup Y) \setminus \{y_i\}$.
\end{itemize}

Notice that the clique of $G$ contains $2t$ vertices.
We will show that there are at least $k$ pairwise disjoint sets in $\{B_1, \ldots, B_p\}$ if and only if there is a solution for \textsc{Strong $P_3$-closure} on $G$ with at least $k' = |E(U \cup Y)|+(k+t)/2$ edges.

Assume that $\mathcal{B}'$ is a family of $k$ pairwise disjoint sets of $B_1, \ldots, B_p$.
For every $B'_i \in \mathcal{B}'$ we choose three vertices $w_i, y_i, x_i$ from $W$, $Y$, and $X$, respectively, such that $x_i$ is non-adjacent to $y_i$ with the following strong edges: $w_i$ is strongly adjacent to $y_i$ and $x_i$ is strongly adjacent to the vertices of $B'_i$ in $U$.
We also make weak the edges inside the clique between the vertices of $B'_i$ and $y_i$.
All other edges incident to $w_i$ and $x_i$ are weak.
Let $W', Y', X'$ be the set of vertices that are chosen from the family $\mathcal{B}'$ according to the previous description.
Every vertex of $W\setminus W'$ is not incident to a strong edge and, thus, it is isolated in $H$.
For the $t-k$ vertices of $Y\setminus Y'$ we choose a matching
and for each matched pair $y_j,y_{j'}$ we make the following edges strong: $x_jy_{j'}$ and $x_{j'},y_j$ where $x_j$ and $x_{j'}$ are non-adjacent to $y_j$ and $y_{j'}$, respectively.
Moreover each edge $y_jy_{j'}$ of the clique is weak and all other edges incident to $x_j$ and $x_{j'}$ are weak.
The rest of the edges inside the clique $U \cup Y$ are strong.
Figure~\ref{fig:split} illustrates such a labeling on the edges of $G$.

\begin{figure}[t]
\centering
\includegraphics[width=0.5\textwidth]{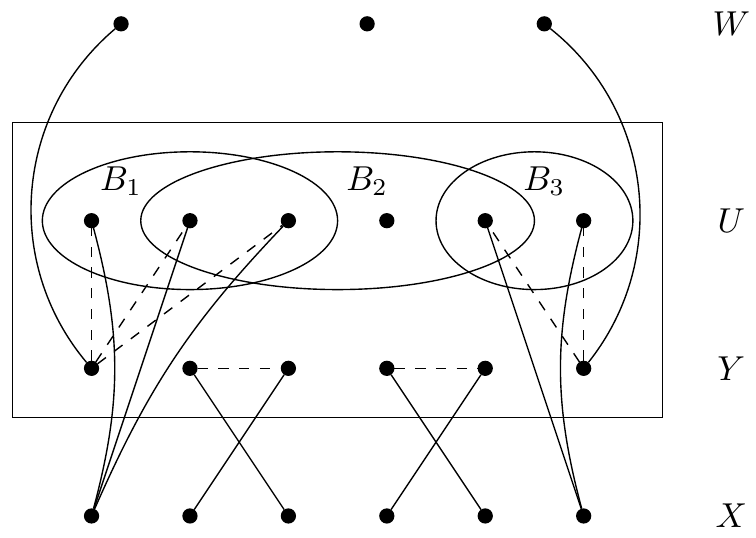}
\caption{Illustrating the split graph $G$ given in the construction in the proof of Theorem~\ref{theo:strong:nopolykernel:P3:split}.
To keep the figure clean, we only draw the strong edges between the independent set $W \cup X$ and the clique $U \cup Y$; the dashed edges of the clique $U \cup Y$ correspond to its weak edges.
Notice that the dashed edges span a union of star graphs.}
\label{fig:split}
\end{figure}

Let us now show that the described subgraph $H$ satisfies the $P_3$-closure with the claimed number of strong edges.
Observe that if there is a $P_3$-graph in $H$ then it must contain a vertex of the independent set incident to a strong edge.
Also notice that no vertex of the clique $U \cup Y$ is strongly adjacent to more than one vertex of the independent set $W \cup X$.
By construction for each $B'_i \in \mathcal{B}'$ the vertices $w_i, x_i$ of the independent set are incident to a strong edge.
The vertices of the clique that are non-adjacent to $w_i$ constitute $B'_i$, and $x_i$ is non-adjacent only to vertex $y_i$.
Since all edges of $E(B'_i,\{y_i\})$ are weak, both vertices $w_i$ and $x_i$ cannot induce a $P_3$-graph.
The rest of the vertices of the independent set that are incident to at least one strong edge belong to $X \setminus X'$.
Every vertex $x_j$ of $X \setminus X'$ is adjacent to all vertices of $(U\cup Y) \setminus\{y_j\}$.
For the strong edge $x_jy_{j'}$ there is a weak edge $y_jy_{j'}$ implying that $x_j$ does not participate in any $P_3$-graph of $H$.
Thus for any vertex $v$ of the independent set that is strongly adjacent to a vertex $v'$ of the clique there are weak edges between $v'$ and the non-neighbors of $v$ in the clique.
Consequently there is no $P_3$-graph in $H$.
For the number of edges in $H$ notice that for every weak edge inside the clique $U \cup Y$ there is a unique matched strong edge incident to a vertex of $X$.
Furthermore every vertex of $W'$ is incident to an unmatched strong edge and each of the half vertices of $X \setminus X'$ is incident to additional unmatched strong edges.
Hence $|E(H)| = |E(U \cup Y)|+ k + (t-k)/2$, which gives the claimed bound $k'$.

For the opposite direction, assume that $H$ is a subgraph of $G$ that satisfies the $P_3$-closure with at least $k'$ edges.
For a vertex $v \in W \cup X$, let $S(v)$ be the strong neighbors of $v$ in $H$ and let $B(v)$ be the non-neighbors of $v$ in $U \cup Y$.
Our task is to show that for any two vertices $u,v$ of $W \cup X$ with non-empty sets $S(u), S(v)$, we have $B(u) \cap B(v)=\emptyset$.
Since there is no $P_3$-graph in $H$, it is clear that all edges of $E(S(v),B(v))$ are weak.
Also observe that for any two vertices $u,v \in W \cup X$, $S(u)\cap S(v) = \emptyset$.
We first prove the following important property for the weak edges of the clique $U \cup Y$.

\begin{claim}\label{claim:weakedgeclique}
An edge $uv$ of the clique $U \cup Y$ does not belong to $H$ only if there are two vertices $u',v'$ in the independent set $W \cup X$ such that $u \in S(u') \cap B(v')$ and $v \in S(v') \cap B(u')$.
\end{claim}

\begin{claimproof}
We will show that in all other cases we can safely make the edge $uv$ strong and maintain the same number of strong edges.
If there are no such vertices in $W \cup X$ then we can make the edge $uv$ strong without violating the $P_3$-closure.
Thus there is at least one vertex $u'$ that is strongly adjacent to $u$ so that $u\in S(u')$.
Moreover if $v \in S(u')$ then both $u$ and $v$ have no other strong neighbor in the independent set which means that we can safely make the edge $uv$ strong.
This implies that $v \notin S(u')$.
Now assume that $v \notin B(u')$, meaning that $v$ is a neighbor of $u'$ in $G$ but not a neighbor of $u'$ in $H$.
Observe that $v$ has at most one strong neighbor in the independent set.
If there is such a strong neighbor $v'$ of $v$ in $W \cup X$ then we make $vv'$ weak and $uv$ strong.
Such a replacement is safe, since $u$ has exactly one strong neighbor $u'$ in $W \cup X$ and all other strong neighbors of $u$ or $v$ belong to the clique.
Hence $v \in B(u')$.

Suppose next that $v$ has no strong neighbor in the independent set.
Then we replace the strong edge $u'u$ by the edge $uv$; such a replacement is safe since $v$ has no strong neighbors in the independent set and $u'$ is the only strong neighbor of $u$ in the independent set.
Thus there is a strong neighbor $v'$ of $v$ such that $v' \in W \cup X$.
Summarizing, there are $u',v' \in W \cup X$ such that $u \in S(u')$, $v \in S(v')$, $v \in B(u')$, and by symmetry reasons for $v'$ we get $u \in B(v')$.
Therefore $u \in S(u') \cap B(v')$ and $v \in S(v') \cap B(u')$.
\end{claimproof}

We next consider the vertices of $X$ from the independent set.

\begin{claim}\label{claim:nostrongW}
Let $H$ be solution in which no vertex of $W$ is incident to a strong edge.
Then $|E(H)| \leq |E(U \cup Y)| + t/2$.
\end{claim}
\begin{claimproof}
We first show that for every vertex $x_i$ of $X$, $S(x_i)$ contains at most one vertex.
Recall that $B(x_i)$ contains exactly one vertex.
Assume for contradiction that $S(x_i)$ contains at least two vertices.
Let $u,v\in S(x_i)$ and let $B(x_i)=z$.
By the $P_3$-closure, both edges $uz$ and $vz$ of the clique must be weak.
Then applying Claim~\ref{claim:weakedgeclique}, there is a vertex $x_j \in X$ such that $z \in S(x_j)$ and $\{u,v\} \subseteq B(x_j)$.
This however is not possible since by construction we know that $B(x_j)$ contains exactly one vertex.
Thus $|S(x_i)|\leq 1$ for every vertex $x_i \in X$.

Let $E_W$ be the set of weak edges that have both their endpoints in the clique.
If there are two edges of $E_W$ incident to the same vertex $u$ then by Claim~\ref{claim:weakedgeclique} the unique vertex $u' \in X$ that is strongly adjacent to $u$ has two non-adjacent vertices in the clique.
Since every vertex of $X$ is non-adjacent to exactly one vertex, there are no two edges of $E_W$ incident to the same vertex.
This means that the edges of $E_W$ form a matching in $E(U \cup Y)$.
Moreover by Claim~\ref{claim:weakedgeclique} for every edge of $E_W$ there are exactly two strong edges between the vertices of the independent set and the clique.
Thus $E(H) = (E(U \cup Y) \setminus E_W) \cup S(X)$ where $S(X)$ are the strong edges with one endpoint in $X$.
Therefore $|E(H)| \leq |E(U \cup Y)| - t/2 + t$, since there are $t$ vertices in $X$.
\end{claimproof}

Thus by Claim~\ref{claim:nostrongW} and the fact that $H$ contains $k'>|E(U \cup Y)| + t/2$ edges, we know that some vertices of $W$ are incident to strong edges in an optimal solution $H$.
We next show that these type of vertices of $W$ must have disjoint non-neighborhood in $G$.
To do so, we consider the {\it weak components} of $E(H)$ in the clique.
A {\it weak component} is a connected component of the clique spanned by the weak edges, that is, by the edges of $E(G)-E(H)$.

Let $C_w$ be a weak component with $n(C_w)$ its number of vertices and $m(C_w)$ its number of weak edges.
Denote by $E_S(C_w)$ the set of strong edges incident to a vertex of $C_w$ and a vertex of the independent set $W \cup X$.
By Claim~\ref{claim:weakedgeclique}, $E_S(C_w)$ is non-empty.
Then the number of edges in $H$ can be described as follows:
$$
|E(H)| = |E(U \cup Y)| + \sum_{C_w} \left(|E_S(C_w)| - m(C_w)\right).
$$
Notice that every vertex of $C_w$ has exactly one strong neighbor in the independent set, since $H$ satisfies the $P_3$-closure.
This means that $|E_S(C_w)| = n(C_w)$.
If we make strong all edges among the vertices of $C_w$ and remove the edges of $E_S(C_w)$ from $H$ then the resulting graph satisfies the $P_3$-closure. 
Thus if $m(C_w) \geq n(C_w)$ then we can safely ignore such a component in the sum of $|E(H)|$ by replacing all its weak edges by the strong edges of $E_S(C_w)$.
This means that $m(C_w)=n(C_w)-1$ because the weak edges of $C_w$ span a connected component.
Therefore every weak component $C_w$ spans a tree in $H$ and $|E(H)| = |E(U \cup Y)| + c$, where $c$ is the number of weak components in $H$.

In fact we will prove that every weak component spans a tree of height one (\emph{star graph}).
Before that, let us first show the following property with respect to the nested non-neighborhood of vertices of $W \cup X$.
For a vertex $v \in W \cup X$, observe that all edges between $S(v)$ and $B(v)$ are weak.
Thus all vertices of $S(v)$ belong to the same weak component of $H$.

\begin{claim}\label{claim:subsetsB}
Let $v_i,v_j \in W \cup X$ such that both $S(v_i)$ and $S(v_j)$ are non-empty.
Then $B(v_i) \not\subseteq B(v_j)$ and $B(v_j) \not\subseteq B(v_i)$.
\end{claim}
\begin{claimproof}
By the $P_3$-closure of $H$, we know that no vertex of the clique has more than one strong neighbor in the independent set, which implies $S(v_i) \cap S(v_j) = \emptyset$.
Assume for contradiction that $B(v_i) \subseteq B(v_j)$.
This means that the vertices of $S(v_i) \cup S(v_j)$ belong to the same weak component $C_w$.
We show that there is an optimal solution $H'$ for which $S(v_j) = \emptyset$ and $|E(H')|=|E(H)|$.
There is no weak edge with the endpoints in $S(v_i)$ and $S(v_j)$, respectively, since $C_w$ is a tree.
Thus all edges between the vertices of $S(v_i)$ and $S(v_j)$ are strong.
This means that $v_i$ is adjacent to every vertex of $S(v_j)$.
We construct $H'$ by replacing all strong edges incident to $v_j$ by strong edges incident to $v_i$.
Remove all strong edges incident to $v_j$ and let $S(v_i) \cup S(v_j)$ be the strong neighbors of $v_i$ in $H'$.
Notice that $|E(H')|=|E(H)|$.
Since we only added strong edges incident to $v_i$ and $B(v_i) \subseteq B(v_j)$, all edges between $B(v_i)$ and $S(v_i) \cup S(v_j)$ are weak and, thus, $H'$ satisfies the $P_3$-closure.
Therefore applying the same replacement for every pair of vertices with nested non-neighborhood, results in an optimal solution as desired.
\end{claimproof}

Let us now show that every weak component of $H$ spans a star graph.
Let $u_1, u_2, \ldots, u_r$ be a path of a weak component $C_w$ of $H$ where $u_1$ is a leaf vertex of $C_w$.
Since $u_1u_2$ is a weak edge, Claim~\ref{claim:weakedgeclique} implies that there is a vertex $v_i$ in the independent set that is strongly adjacent to $u_1$ such that $B(v_i)=\{u_2\}$.
If $C_w$ is not a star then $r\geq 4$.
For $r\geq 4$, we know that there is a vertex $v_j$ in the independent set that is strongly adjacent to $u_3$ such that $B(v_j)=\{u_2,u_4\}$.
Then we reach a contradiction since $B(v_i) \subset B(v_j)$ which is not possible by Claim~\ref{claim:subsetsB}.
Therefore $r\leq 3$ which implies that $C_w$ spans a tree of height one.

\begin{claim}\label{claim:disjointB}
Let $v_i,v_j \in W \cup X$ such that both $S(v_i)$ and $S(v_j)$ are non-empty.
Then $B(v_i) \cap B(v_j) = \emptyset$.
\end{claim}
\begin{claimproof}
Recall that $S(v_i) \cap S(v_j) = \emptyset$ and notice that all edges of $E(S(v_i),B(v_i))$ and $E(S(v_j),B(v_j))$ are weak.
If the vertices of $S(v_i)$ belong to a different weak component from the vertices of $S(v_j)$ then $B(v_i)$ and $B(v_j)$ are disjoint.
Suppose that the vertices of $S(v_i)$ and $S(v_j)$ belong to the same weak component $C_w$.
Let $u$ be the non-leaf vertex of the star $C_w$.
If both $v_i$ and $v_j$ are strongly adjacent to leaf vertices of $C_w$, then $B(v_i)=B(v_j)=\{u\}$.
Thus by Claim~\ref{claim:subsetsB}, $v_i$ is strongly adjacent to $u$ so that $B(v_i) = C_w \setminus\{u\}$ and $v_j$ is strongly adjacent to all leaf vertices of $C_w$ so that $B(v_j)=\{u\}$.
Consequently $B(v_i)$ and $B(v_j)$ are disjoint sets.
\end{claimproof}

By construction, $B(x)$ with $x\in X$ is disjoint with any $B(v)$ of a vertex $v \in (W \cup X)\setminus\{x\}$.
Claims~\ref{claim:nostrongW} and \ref{claim:disjointB} imply that all vertices of $W$ that are incident to at least one strong edge in $H$ must have disjoint non-neighborhood.
Since $B(w_i)=B_i$, there are $k$ pairwise disjoint sets in $\{B_1, \ldots, B_p\}$ for the $k$ vertices of $W$ that are incident to at least one strong edge in $H$.
Therefore there is a solution for the \textsc{Set Packing} problem for $(\mathcal{U}, B_1, \ldots, B_p, k)$.
\end{proof}

Let $F$ be a graph that has at least one component with at least three vertices. If $M$ is a matching in a graph $G$, then the spanning subgraph $H$ of $G$ with $E(H)=M$ satisfies the $F$-closure. It implies that an instance $(G,k)$ of \textsc{Strong $F$-Closure} is a yes-instance of the problem if the maximum matching size $\mu(G)\geq k$. Since  a maximum matching can be found in polynomial time~\cite{MicaliV80}, we can solve  \textsc{Strong $F$-Closure} in polynomial time for such instances. This gives rise to the question about the parameterized complexity of \textsc{Strong $F$-Closure} with the parameter $r=k-\mu(G)$. We show that \textsc{Strong Triadic Closure} is FPT with this parameter for the instances where $\Delta(G) \le 4$. Note that
{\sc Strong Triadic Closure} is NP-complete on graphs $G$ with $\Delta(G) \le d$ for every $d\geq 4$ \cite{KonstantinidisN17}.

\begin{theorem}\label{theo:strong:bounded4:fpt}
{\sc Strong  Triadic Closure} can be solved in time $2^{\Oh(r)}\cdot n^{\Oh(1)}$ on graphs of maximum degree at most 4, where $r=k-\mu(G)$.
\end{theorem}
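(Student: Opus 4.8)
I would begin with the standard reformulation: a spanning subgraph $H$ of $G$ satisfies the $P_3$-closure if and only if $N_H(v)$ induces a clique of $G$ for every vertex $v$, and since $\Delta(G)\le 4$ such cliques have at most four vertices. Write $\mathrm{opt}(J)$ for the maximum number of edges of a spanning subgraph of $J$ satisfying the $P_3$-closure; we must decide $\mathrm{opt}(G)\ge k$. Two boundary cases are free: if $k\le\mu(G)$ answer yes (a maximum matching is a solution); if $k>4\mu(G)$ answer no, since for a maximum matching $M_H$ of a solution $H$ the $M_H$-exposed vertices are independent in $H$, so $V(M_H)$ is a vertex cover of $H$ in which every vertex has $H$-degree at most $4$, whence $|E(H)|\le 4|M_H|\le 4\mu(G)$. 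So we may assume $\mu(G)<k\le 4\mu(G)$, i.e. $1\le r\le 3\mu(G)$; and if additionally $r\ge n/c_0$ for a suitable constant $c_0$ we simply brute-force over all $\le 2^{2n}=2^{\Oh(r)}$ spanning subgraphs, so from now on $r<n/c_0$.

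The core is a structural reduction. Fix a maximum matching $M$ of $G$ and, for $A\subseteq V(G)$, let $m_M(A)=|\{e\in M:e\cap A\ne\emptyset\}|$. Easy direction: for \emph{any} $Z\subseteq V(G)$, combining an optimal solution of $G[Z]$ with $M\cap E(G-Z)$ gives a solution of $G$ with $\mathrm{opt}(G[Z])+|M|-m_M(Z)$ edges; hence if some $Z$ has $\mathrm{opt}(G[Z])-m_M(Z)\ge r$ then $(G,k)$ is a yes-instance. For the converse, suppose $(G,k)$ is a yes-instance; by Observation~\ref{obs:subgraph} there is a solution $H$ with exactly $\mu(G)+r$ edges, and I would pick a pair $(H,M)$ ($M$ a maximum matching, $H$ such a solution) minimizing $|E(H)\triangle M|$. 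Then (i) every connected component $D$ of $(V(G),E(H)\triangle M)$ has strictly more edges of $H$ than of $M$ --- otherwise, swapping on $V(D)$ the edges of $H$ for those of $M$ keeps $H$ a solution (afterwards each vertex of $V(D)$ is $H$-adjacent only to its $M$-partner), does not decrease $|E(H)|$, and, after deleting a few $H\setminus M$ edges to return to $\mu(G)+r$ edges, strictly decreases $|E(H)\triangle M|$ --- and (ii) each such $D$ has $\Oh(1)$ vertices. Granting (i)--(ii), let $Z_0=\bigcup_D V(D)$ together with the $M$-partner of every vertex of $\bigcup_D V(D)$; then $|Z_0|=\Oh(r)$ (at most $r$ components by (i), each $\Oh(1)$ vertices, then at most doubled), no edge of $H$ has exactly one endpoint in $Z_0$, the edges of $H$ outside $Z_0$ are exactly $M\cap E(G-Z_0)$, and so $\mu(G)+r=|E(H)|\le\mathrm{opt}(G[Z_0])+|M|-m_M(Z_0)$, i.e. $\mathrm{opt}(G[Z_0])-m_M(Z_0)\ge r$. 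Moreover (using the room afforded by $r<n/c_0$ and that a positive-gain component must exploit a triangle of $G$) one can further rearrange $Z_0$ into a set $Z$ whose $G$-connected components --- which are automatically pairwise non-adjacent in $G$ --- are $\Oh(r)$ in number and each of $\Oh(1)$ vertices. So: $(G,k)$ is a yes-instance iff some such $Z$ satisfies $\mathrm{opt}(G[Z])-m_M(Z)\ge r$.

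The algorithm is then colour coding. Using a standard family of $2^{\Oh(r)}\log n$ colourings $V(G)\to\{1,\dots,\Oh(r)\}$, one of which is injective on the witnessing $Z$, fix a colouring and look for a colourful $Z$ of the above form. For each colour set $S$ with $|S|=\Oh(1)$, precompute $\beta(S)=\max\{\mathrm{opt}(G[V(P)])-m_M(V(P))\}$ over connected colourful subgraphs $P$ with colour set $S$: there are only $n^{\Oh(1)}$ such $P$ (bounded order, $\Delta(G)\le 4$) and each $\mathrm{opt}$ is brute-forced in $\Oh(1)$ time; note that $m_M$ is the only place the global matching enters, via one precomputed $M$. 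A subset dynamic program over the $2^{\Oh(r)}$ colour subsets computes $\gamma(T)=\max_{S\subseteq T}\{\gamma(T\setminus S)+\beta(S)\}$, $\gamma(\emptyset)=0$. Since $\mathrm{opt}$ is superadditive over disjoint vertex sets and $m_M$ is subadditive, a value $\gamma(T)\ge r$ yields a disjoint union $Z'$ of colourful pieces with $\mathrm{opt}(G[Z'])-m_M(Z')\ge r$, hence a yes-instance; conversely the witnessing $Z$, split into its $G$-components, exhibits such a packing, so the DP detects every yes-instance. Accepting iff $\gamma(T)\ge r$ for some $T$ and some colouring in the family gives total running time $2^{\Oh(r)}\cdot n^{\Oh(1)}$.

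The step I expect to be the main obstacle is part (ii) of the structural reduction --- that for a pair $(H,M)$ minimizing $|E(H)\triangle M|$ the components of $E(H)\triangle M$ have constant size, and that $Z$ can be reorganized into constant-size, pairwise non-adjacent pieces. This is exactly where $\Delta(G)\le 4$ must be used in an essential way (consistently with the problem being NP-hard for $\Delta\ge 4$ without the extra parameter): the argument rerouting $M$ along $M$-alternating paths to destroy long ``detours'', followed by a case analysis showing that a bounded-degree component admitting no further local exchange cannot be large, is the delicate part. Everything else --- the boundary cases, the easy direction, the positive-gain cleanup, the super/sub-additivity, and the colour-coding dynamic program with its single hidden maximum-matching computation --- is routine once (ii) is in place.
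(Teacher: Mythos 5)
Your route is genuinely different from the paper's (which greedily extracts $K_4$'s and matching-beating triangles to get a set $X$ with $|X|\le 4r$, proves structural claims about triangle-maximal ``regular'' solutions in $G-X$, applies reduction rules to triangles by their distance from $X$, and branches over $2^{\Oh(r)}$ choices of triangles and of edges incident to $X$), and several of your ingredients are sound: the reformulation of the $P_3$-closure, the ``easy direction'' that $\mathrm{opt}(G)\ge \mathrm{opt}(G[Z])+|M|-m_M(Z)$, the exchange argument (i) (swapping a component $D$ of $E(H)\triangle M$ with $|E(H)\cap D|\le |M\cap D|$ for its $M$-edges does preserve the closure and lets you assume every component has positive gain, so there are at most $r$ components), and the soundness of the colour-coding dynamic program via superadditivity of $\mathrm{opt}$ and subadditivity of $m_M$. (A minor slip: the vertex cover $V(M_H)$ has $2|M_H|$ vertices of degree at most $4$, so the no-certificate is $|E(H)|\le 8\mu(G)$, not $4\mu(G)$; this only changes a constant.)

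However, the proposal has a genuine gap exactly where you flag it, and it is the load-bearing step, so the argument as written does not establish the theorem. First, you give no proof that for a pair $(H,M)$ minimizing $|E(H)\triangle M|$ every component of the symmetric difference has $\Oh(1)$ vertices; this needs a real exchange/case analysis using $\Delta(G)\le 4$ and the clique-neighbourhood structure of $H$, and it is not obviously a short argument (a component can a priori consist of many $H$-triangles chained together by $M\setminus H$ edges with total gain only $1$, and ruling this out requires showing such chains always admit a difference-reducing swap). Second, even granting bounded components, the ``reorganization'' of $Z_0$ into $\Oh(r)$ pairwise non-adjacent pieces of size $\Oh(1)$ is asserted, not proved, and there is a concrete obstruction: to make the per-piece accounting $\mathrm{opt}(G[Z_D])-m_M(Z_D)\ge \mathrm{gain}(D)$ work you must add the $M$-partners of $V(D)$, but a partner can lie in another component $D'$ (via an $H\cap M$ edge), so distinct pieces overlap and can chain into one large $G$-connected block; conversely, if you instead cut $Z$ into parts that are not unions of $G$-components, an $M$-edge crossing two parts is counted twice in $\sum m_M(\text{piece})$ and the piece-wise gains no longer sum to $r$. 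Since your DP only enumerates connected colourful pieces of bounded order, completeness of the entire algorithm rests on precisely these two unproven claims; until they are supplied (or replaced, e.g., by the paper's triangle-extraction and branching scheme), the proof is incomplete.
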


\begin{proof}
Let $(G,k)$ be a instance of {\sc Strong  Triadic Closure} such that $\Delta(G)\leq 4$. Let also $r = k-\mu(G)$.

We construct the set of vertices $X$ and the set of edges $A$ as follows. Initially, $X=\emptyset$ and $A=\emptyset$. Then we exhaustively perform the following steps in a greedy way:
\begin{enumerate}
\item If there exists a copy of  $K_4$ in $G-X$, we add the vertices of this $K_4$ to $X$ and the edges between these vertices to $A$.
\item If there exists a triangle $T$ in $G-X$ such that $\mu(G-X)<3+\mu(G-X-T)$, we add the vertices of $T$ to $X$ and and the edges of $T$ to $A$.
\end{enumerate}

Let $M$ be a maximum matching of $G-X$ for the obtained set $X$. Note that the spanning subgraph $H$ of $G$ with the set of edges $A\cup M$ is a disjoint union of complete graphs with 1, 2, 3 or 4 vertices, that is, $H$ has no induced path on three vertices. Hence, $H$ satisfies the $P_3$-closure.
Assume that Step~1 was applied $p$ times and we used Step~2 $q$ times. Clearly, $|A|=6p+3q$.
Notice that the vertices of a copy of $K_4$ can be incident to at most 4 edges of a matching and  the complete graph with 4 vertices has 6 edges. Observe also that by the application of Step~2, we increase the size of $A$ by 3 and $\mu(G-X)-\mu(G-X-T)\leq 2$. This implies that
$|E(H)|=|A|+|M|\geq \mu(G)+2p+q$.
Therefore, if $2p+q\geq r$, $(G,k)$ is a yes-instance of {\sc Strong  Triadic Closure}. Assume from now that this is not the case. In particular, it means that $|X|\leq 4r$ and $G'=G-X$ is a $K_4$-free graph. By the choices made in both steps, notice that every vertex of $X$ has at least two neighbors inside $X$.
Let $Y=V(G)\setminus X=V(G')$.

We need some structural properties of $G'$ and (possible) solutions for the considered instance of {\sc Strong  Triadic Closure}.

\begin{claim}\label{claim:trianglesM}
If $T$ is a triangle in $G'$, then $T$ satisfies the following properties:
\begin{enumerate}
\item[(i)] $T$ contains no edge of $M$;
\item[(ii)] every vertex of $T$ is incident to an edge of $M$.
\end{enumerate}
\end{claim}

\begin{claimproof}
In both cases, the triangle $T$ is such that at most two edges of the matching~$M$ are incident to its vertices. This implies that $\mu(G')<3+\mu(G'-T)$, which is a contradiction with the fact that Step~2 can no longer be applied.
\end{claimproof}

We say that a solution $H$ for $(G,k)$ is \emph{regular} if $H[Y]$ is a disjoint union of triangles, edges and isolated vertices.
We also say that a solution $H$ is \emph{triangle-maximal} if (i) it contains the maximum number of edges and, subject to (i), (ii) contain the maximum number of pairwise distinct triangles.

\begin{claim}\label{claim:structureG-X}
If $(G,k)$ is a yes-instance of {\sc Strong  Triadic Closure}, then every triangle-maximal solution is regular.
\end{claim}

\begin{claimproof}
Let $H$ be a triangle-maximal solution for $(G,k)$.

We first note that, $H'$ has no $K_{1,3}$ as a subgraph. Otherwise it would imply the existence of a $K_4$ in $G-X$, because for every copy $xyz$ of (not necessarily induced) $P_3$ in $H$, $xz\in E(G)$ if $H$ satisfies the $P_3$-closure.
This implies that $H'$ consists of a disjoint union of paths and cycles.
Consider an induced path on three vertices $P_3=v_1v_2v_3$ in $H'$.
By the $P_3$-closure there is the edge $v_1v_3$ in $G$.
We prove that the $P_3$ has a particular form which allows us to make $v_1v_3$ strong, i.e., the triangle $v_1v_2v_3$ belongs to a solution $H$.
In particular we show that $N_H(v_1)=\{v_2\}$, $N_H(v_2)=\{v_1,v_3\}$, and $N_H(v_3)=\{v_2,y\}$ where $y$ is a vertex in $G$.
\begin{itemize}
\item First observe that $v_2$ has no other neighbor in $H'$, because $v_2$ belongs to a path or a cycle in $H'$.
Assume that there is a vertex $x\in X$ that is adjacent to $v_2$ in $H$.
Then by the $P_3$-closure, $x$ is adjacent in $G$ to all three vertices of $P_3$ which contradicts the fact that $d(x)\leq 4$,
because $x$ is adjacent to at least two vertices inside $X$.
Thus $v_2$ has no other neighbor in $H$.

\item Next assume that there are vertices $u_1,u_3$ such that $u_1 \in N_H(v_1)\setminus\{v_2\}$ and $u_3 \in N_H(v_3)\setminus\{v_2\}$.
If $u=u_1=u_3$ then $u$ does not belong to $Y$ because there is no $K_4$ in $G'$.
And if $u\in X$ then by the $P_3$-closure, $u$ is adjacent to all three vertices of the $P_3$ which contradicts the fact that $d(u) \leq 4$.
For $u_1\neq u_3$, notice that $v_2$ is adjacent to both $u_1,u_3$ by the $P_3$-closure.
Then both $u_1v_1v_2$ and $u_3v_3v_2$ form triangles in $G$, which implies by Claim~\ref{claim:trianglesM} that there is an edge $v_2v$ of $M$ with $v\notin \{v_1,v_3,u_1,u_3\}$.
This, however, contradicts the fact that $d(v_2)\leq 4$.

\item By the previous two arguments, we know that at least one of $v_1,v_3$ is only adjacent to $v_2$ in $H$.
Without loss of generality, assume that $N_H(v_1)=\{v_2\}$.
If $N_H(v_3)=\{v_2,y,y'\}$ then by the $P_3$-closure $v_2$ is adjacent in $G$ to both $y,y'$.
Applying Claim~\ref{claim:trianglesM} shows that there is another edge incident $v_2$, contradicting the fact that $d(v_2)\leq 4$.
Also note that if $N_H(v_3)=\{v_2\}$ then both $v_1,v_3$ have no other strong edge incident to them, so that the edge $v_1v_3$ of $G$ can be made strong which contradicts the maximality of $H$.
\end{itemize}
Thus for the given $P_3$ we know that $N_H(v_1)=\{v_2\}$, $N_H(v_2)=\{v_1,v_3\}$, and $N_H(v_3)=\{v_2,y\}$.
This means that we can replace in $H$ the edge $v_3y$ by the edge $v_1v_3$ without violating the $P_3$-closure.
Iteratively applying such a replacement for every $P_3$ of $H'$ shows that $H$ is regular.
\end{claimproof}

\begin{claim}\label{claim:triangled1}
Let $T=abc$ be a triangle in $G'$ that is at distance one from $X$. If $H$ is a solution containing $T$, then $H$ contains no other edge incident to the vertices $a$, $b$, and $c$.
\end{claim}

\begin{claimproof}
Let $T=abc$ be a triangle as described above and let $H$ be a solution containing~$T$.
Assume for a contradiction that there exists an edge $xa$ in $H$ that is incident to a vertex of $T$.
Suppose that $x\in X$. 
This implies that $xb\in E(G)$ and $xc\in E(G)$.
Since $x$ has at least two neighbors inside $X$, we conclude that $d(x)>4$, a contradiction.
If $x \in G-X$, this would imply the existence of $K_4$ in $G'$, a contradiction.
\end{claimproof}

\begin{claim} \label{claim:1triangle}
Let $T$ be a triangle at distance at least two from $X$ that does not intersect any other triangle. Then $T$ is included in every triangle-maximal regular solution for $(G,k)$.
\end{claim}

\begin{claimproof}
Let $T=abc$ be a triangle as described above and assume that $H$ is a triangle-maximal regular solution
that does not contain $T$. Since no other triangle intersects $T$, at most one edge of $H$ is incident to each vertex of $T$ by Claim~\ref{claim:structureG-X}.
If no edge of $T$ is in $H$, we can replace the edges incident in $T$ by the edges $ab$, $bc$ and $ac$ and obtain a solution with at least as many edges as $H$ containing~$T$.
If there exists an edge of $T$ in $H$, let $ab$ be such an edge.
Again, since no other triangle intersects $T$, there is no other edge of the solution that is incident to $a$ or $b$ and at most one edge is incident to $c$.
Then we replace the edge incident to $c$ by the two edges of the triangle $abc$ and obtain a solution with more edges, a contradiction.
\end{claimproof}

\begin{claim} \label{claim:2triangles}
If $T_1$ and $T_2$ are two intersecting triangles in $G'$, then the following holds:
\begin{enumerate}
\item $T_1$ and $T_2$ have one edge in common;
\item No other triangle intersects $T_1$ or $T_2$;
\end{enumerate}
\end{claim}
\begin{claimproof}
Let $T_1$ and $T_2$ be two intersecting triangles as described above. Assume for a contradiction that $T_1$ and $T_2$ have only a single vertex in common and let $a$ be such a vertex. Recall that $M$ is a maximum matching in $G-X$. By Claim~\ref{claim:trianglesM}, there exists an edge of the matching incident to $a$ that cannot be contained neither in $T_1$ nor in $T_2$, which implies that $d(a)>4$, which is a contradiction. We conclude that the triangles must intersect in one edge. Let $T_1=abc$ and $T_2=bcd$. Assume for the sake of contradiction that there exists a triangle $T$ that intersects $T_1$. Since by the first argument of the claim the triangles $T$ and $T_1$ cannot intersect in a single vertex, $T$ contains at least one of $b$ or $c$. Assume $b\in V(T)$. Again, by Claim~\ref{claim:trianglesM}, there must be an edge of $M$ incident to $b$ that is not contained in any of the triangles, which implies that $d(b)>4$, a contradiction. This concludes the proof.
\end{claimproof}

\begin{claim}\label{claim:intersecting}
If $T_1$ and $T_2$ are two intersecting triangles such that $T_1$ is at distance at least two from $X$, then either $T_1$ or $T_2$ is  included in every triangle-maximal regular solution for $(G,k)$.
\end{claim}

\begin{claimproof}
Let $H$ be a solution. By Claim~\ref{claim:2triangles}, $T_1$ and $T_2$ have exactly one common edge. Let $T_1=abc$ and $T_2=bcd$.
Assume that a triangle-maximal regular solution $H$
contains neither $T_1$ nor $T_2$. Note that at most one edge of $H$ is incident to $a$ by Claim~\ref{claim:structureG-X}.
Because $H$ does not contain all the edges of $T_2$, the same holds for $b$ and $c$ by  Claim~\ref{claim:structureG-X}. Now we repeat the same arguments as in the proof of Claim~\ref{claim:1triangle}. If no edge of $T_1$ is in $H$, we can replace the edges incident in $T_1$ by the edges $ab$, $bc$ and $ac$ and obtain a solution with at least as many edges as $H$ and containing $T_1$. If there exists an edge of $T_1$ in $H$, then at most two edges of $H$ are incident to the vertices of $T_1$ and we can replace them by the edges of $T_1$ and increase the number of edges in the solution contradicting the choice of $H$.
\end{claimproof}

\medskip
Given the properties of the triangles in $G'$ and the properties of triangle-maximal regular solutions, we can now ready to solve the problem by finding a regular solution if it exists.
Recall that by Claim~\ref{claim:structureG-X}, a regular solution $H$ to the problem when restricted to $G-X$ is a disjoint union of triangles, edges and isolated vertices.
The crucial step is to sort out triangles in $G'$.

We first consider the triangles in $G'$ that are at distance at most one from the set~$X$ in $G$, that is, the triangles that contain at least one vertex that is adjacent to a vertex of $X$ in $G$. Since $|X|\leq 4r$ and since every vertex of $X$ has at least two neighbors inside $X$, we have that $|N_{G}(X)|\leq 8r$. By Claim~\ref{claim:2triangles}, at most 2 triangles of $G'$ contain the same vertex. Thus, the number of pairwise distinct triangles in $G'$ that are at distance at most one from the set~$X$ in $G$ is at most $16r$. We list all these triangles, and branch on all at most $2^{16r}$ choices of the triangles that are included in a triangle-maximal regular solution. Then, for each choice of these triangles, we try to extend the partial solution. If we obtain a solution for one of the choices we return it and the algorithm returns NO otherwise.

Assume that we are given a set $\mathcal{T}_1$ of triangles at distance one from $X$ that should be a solution. Note that by Claim~\ref{claim:structureG-X}, the triangles in $\mathcal{T}_1$ are pairwise disjoint. We apply the following reduction rule.

\begin{Rule} \label{rule:delTriangles-one}
Set $G=G-\cup_{T\in\mathcal{T}_1}T$ and set $k=k-3|\mathcal{T}_1|$.
\end{Rule}

By Claim~\ref{claim:triangled1}, the original instance has a regular solution if and only if the obtained instance has a regular solution that does not contain triangles in $G-X$ that are at distance one from $X$. Our aim now is to find such a solution. For simplicity, we keep the same notation and assume that $G'=G-X$.

Now we deal with triangles that are at distance at least 2 from $X$.
Consider the set $\mathcal{T}_2$ of triangles in $G'$ that are at distance at least 2 from $X$ and have no common vertices with other triangles in $G'$. By Claim~\ref{claim:1triangle}, all these triangles are in every triangle-maximal regular solution.
It immediately gives us the following rule.

\begin{Rule} \label{rule:delTriangles-two}
Set $G=G-\cup_{T\in\mathcal{T}_2}T$ and set $k=k-3|\mathcal{T}_2|$.
\end{Rule}

We again assume that $G'=G-X$.
To consider the remaining triangles, recall that by Claim~\ref{claim:2triangles}, for every such a triangle $T$, $T$ is intersecting with a unique triangle $T'$ of $G'$ and $T,T'$ are sharing an edge.

Let $\mathcal{T}_3$ be the set of triangles in $G'$ that are at distance at least 2 from $X$ in $G$ and have a common edge with a triangle at distance one from $X$.
Recall that we are looking for  a regular solution that does not contain triangles in $G-X$ that are at distance one from $X$. Then by
Claim~\ref{claim:intersecting}, triangles of $\mathcal{T}_3$ should be included to a triangle-maximal regular solution,
and we get the next rule.

\begin{Rule} \label{rule:delTriangles-three}
Set $G=G-\cup_{T\in\mathcal{T}_3}T$ and set $k=k-3|\mathcal{T}_3|$.
\end{Rule}

As before, let $G'=G-X$. The remaining triangles in $G'$ at distance at least 2 from $X$ in $G$ form pairs $\{T_1,T_2\}$ such that $T_1$ and $T_2$ have a common edge and are not intersecting any other triangle. Let $\mathcal{P}$ be the set of all such pairs. By Claim~\ref{claim:intersecting}, a triangle-maximal regular solution
contains either $T_1$ or $T_2$. We use this to apply the following rule.

\begin{Rule} \label{rule:delPairs}
For every pair $\{T_1,T_2\}\in\mathcal{P}$, delete the vertices of $T_1$ and $T_2$ from $G$, construct a new vertex $u$ and make it adjacent to the vertices of $N_G((T_1\setminus T_2)\cup(T_2\setminus T_1))$.
Set $k=k-3|\mathcal{P}|$.
\end{Rule}

Denote by  $(\hat{G},\hat{k})$ the instance of {\sc Strong  Triadic Closure} obtained from $(G,k)$ by the application of Rule~\ref{rule:delPairs}.
We show the following claim.

\begin{claim}\label{claim:delPairs-one}
If  the instance  $(G,k)$ has a triangle-maximal regular solution $H$ that has no triangles in $G-X$ at distance one from $X$, then  there is a solution $\hat{H}$ for $(\hat{G},\hat{k})$ such that $\hat{H}-X$ is a disjoint union of edges and isolated vertices, and if there is a solution $\hat{H}$ for $(\hat{G},\hat{k})$ such that $\hat{H}-X$ is a disjoint union of edges and isolated vertices, then $(G,k)$ has a regular solution $H$ that has no triangles in $G-X$ at distance one from $X$
 \end{claim}

\begin{claimproof}
Let $H$ be a triangle-maximal regular solution for $(G,k)$ such that $H$ has no triangles in $G-X$ at distance one from $X$. Notice that if $H$ contains a triangle, then it belongs to one of the pairs of $\mathcal{P}$. By Claim~\ref{claim:intersecting}, we can assume that $H$ contains a triangle from every pair from $\mathcal{P}$.
 We construct a solution $\hat{H}$ for $(\hat{G},\hat{k})$ by modifying $H$ as follows. First, we include in $\hat{H}$ the edges of $H$ that are not incident to the vertices of the pairs of triangles of $\mathcal{P}$.
 For every pair $\{T_1,T_2\}\in\mathcal{P}$, $H$ contains either $T_1$ or $T_2$. Assume without loss of generality that $T_1$ is in $H$. Let $v$ be the vertex of $T_2$ that is not included in $T_1$. By Claims~\ref{claim:structureG-X} and \ref{claim:trianglesM}, at most one edge of $H$ is incident to $v$ and there is no edge in $H$ that is incident to exactly one vertex of $T_1$. Let $u$ be the vertex of $\hat{G}$ constructed by Rule~\ref{rule:delPairs} for $\{T_1,T_2\}$. If $vx\in E(H)$ for some $x\in V(G)$, then we include the edge $ux'$ in $\hat{H}$, where $x'$ is the vertex constructed from $x$ by the rule; note that it can happen that $x$ is a vertex of some other pair of triangles.
 Since we include in $\hat{H}$ at most one edge incident to a vertex constructed by the rule, $\hat{H}$ does not contain triangles and is a disjoint union of edges and isolated vertices.
 Moreover, since $|E(H)|\geq k$, we have that $|E(\hat{H})|\geq k-3|\mathcal{P}|=\hat{k}$.

Suppose now that $\hat{H}$ is a solution for $(\hat{G},\hat{k})$ such that $\hat{H}-X$ is a disjoint union of edges and isolated vertices. Now we construct $H$ by modifying $\hat{H}$. For every edge $uv$ of $\hat{H}$ such that $u$ and $v$ are vertices of the original graph $G$, we include $uv$ in $H$. Assume that $uv\in E(\hat{H})$ is such that $v\in V(G)$ and $u$ was obtained from a pair  $\{T_1,T_2\}\in \mathcal{P}$. Then $v$ is adjacent in $G$ to a vertex $x$ that belongs to exactly one of the triangles, say $T_1$. We include $xv$ and $T_2$ in $H$.
Suppose that $uv\in E(\hat{H})$ is such that $u$ was obtained from a pair  $\{T_1,T_2\}\in \mathcal{P}$ and $v$ was obtained from a pair $\{T_1',T_2'\}\in \mathcal{P}$.
Then $G$ has an edge $xy$ such that  $x$ that belongs to exactly one of the triangles  $T_1,T_2$, say $T_1$, and  $y$  belongs to exactly one of the triangles  $T_1',T_2'$, say $T_1'$.
We include $xy$, $T_2$ and $T_2'$ in $H$. Finally, if there is a pair $\{T_1,T_2\}\in \mathcal{P}$ such that for the vertex $u\in V(\hat{G})$ constructed from this pair, $\hat{H}$ has no edge incident to $u$, we include $T_1$ in $H$. With this way we obtain $H$ such that $H-X$ is a disjoint union of triangles, edges and isolated vertices. It remains to note that because  $|E(\hat{H})|\geq \hat{k}$, we have that $|E(H)|\geq k$, that is, $H$ is a regular solution.
\end{claimproof}

By Claim~\ref{claim:delPairs-one}, we have to find a solution for the instance  $(\hat{G},\hat{k})$ such that $\hat{H}-X$ is a disjoint union of edges and isolated vertices. We do it by branching on all possible choices of edges in a solution that are incident to the vertices of $X$. Since $|X|\leq 4$ and $\Delta(G)\leq 4$, there are at most $16r$ edges that are incident to the vertices of $X$ and, therefore, we branch on at most $2^{16r}$ choices of a set of edges $S$. Then for each choice of $S$, we are trying to extend it to a solution. If we can do it for one of the choices, we return the corresponding solution, and the algorithm  returns NO otherwise.

Assume that $S$ is given. First, we verify whether the spanning subgraph of $G$ with the set of edges $S$ satisfies the $P_3$-closure. If it is not so, we discard the current choice of $S$ since, trivially, $S$ cannot be extended to a solution. Assume that this is not the case. Let $R=\hat{G}-X$. We modify $R$ by the exhaustive application of the following rule.

\begin{Rule} \label{rule:delR}
If there is $xy\in E(R)$ such that there is $z\in X$ such that
$xz\in S$ and $yz\notin E(\hat{G})$, then delete $xy$ from $R$.
\end{Rule}

Let $R'$ be the graph obtained from $R$ by the rule. Observe that the edges deleted by Rules~\ref{rule:delR} cannot belong to a solution. Hence, to extend $S$, we have to complement it by some edges of $R'$ that form a matching. Moreover, every matching of $R'$ could be used to complement $S$. Respectively, we find a maximum matching $M$ in $R'$ in polynomial time~\cite{MicaliV80}. We obtain that the spanning subgraph $\hat{H}$ of $\hat{G}$ with $E(\hat{H})=S\cup M$ satisfies the $P_3$-closure. We verify whether $|S|+|M|\geq \hat{k}$. If it holds, we return $\hat{H}$. Otherwise, we discard the current choice of $S$.

The correctness of the algorithm follows from the properties of Rules~\ref{rule:delTriangles-one}--\ref{rule:delR} and Claim~\ref{claim:delPairs-one}. To evaluate the running time, observe that Steps~1 and 2 that were used to construct $X$ and $A$ can be done in polynomial time. Then we branch on at most $2^{16r}$ choices of $\mathcal{T}_1$. For each choice, we apply Rules~\ref{rule:delTriangles-one}--\ref{rule:delPairs} in polynomial time. Then we consider at most $2^{16r}$ choices of a set of edges $S$. For each choice, we apply Rule~\ref{rule:delR} in polynomial time and then compute a maximum matching in $R'$~\cite{MicaliV80}. Summarizing, we obtain the running time $2^{\Oh(r)}\cdot n^{\Oh(1)}$.
\end{proof}

\section{Concluding remarks}
\label{sec:conc}
To complement our results so far, we give here the parameterized complexity results when our problem is parameterized by the number of weak edges. The following result is not difficult to deduce using similar ideas to those used in proving that \textsc{$F$-free Edge Deletion} is FPT by the number of deleted edges \cite{Cai96}.

\begin{theorem}\label{theo:weak:bounded:fpt}
For every fixed graph $F$, \textsc{Strong $F$-closure} can be solved in time $2^{\Oh(\ell)}\cdot n^{{O}(1)}$, where $\ell= |E(G)|-k$.
\end{theorem}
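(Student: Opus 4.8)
The plan is to use the standard bounded-search-tree argument for forbidden-subgraph modification problems, adapted to the ``closure'' relaxation. Suppose we are given $(G,k)$ with $\ell = |E(G)| - k$, and we must decide whether $G$ has a spanning subgraph $H$ satisfying the $F$-closure with at least $k$ edges, equivalently whether we can make at most $\ell$ edges weak so that the resulting subgraph has no $F$-graph (an induced copy of $F$ in $H$ that is also an induced copy of $F$ in $G$). First I would observe the crucial monotonicity fact from Observation~\ref{obs:subgraph}, phrased for weak edges: if deleting a set $D$ of edges from $G$ destroys all $F$-graphs, then deleting any superset of $D$ also works, so it suffices to look for a weak set of size \emph{exactly} $\ell$ (or detect success earlier). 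The algorithm: while the current spanning subgraph $H$ still contains an $F$-graph on a vertex set $S$ (with $|S| = |V(F)|$, a constant), we must destroy it by making at least one of its edges weak; since $H[S] \simeq F$ has at most $\binom{|V(F)|}{2}$ edges, we branch into that many subcases, in each one removing one edge of $H[S]$ and decrementing the weak-edge budget. If the budget reaches $0$ and an $F$-graph still remains, that branch fails; if we reach a subgraph with no $F$-graph and we have removed at most $\ell$ edges, we accept.

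The key steps, in order, are: (1) checking in polynomial time whether the current subgraph $H$ has an $F$-graph and, if so, exhibiting one --- this is done by brute force over all $\binom{n}{|V(F)|}$ vertex subsets, which is $n^{O(1)}$ for fixed $F$, testing whether both $H[S] \simeq F$ and $G[S] \simeq F$; (2) the branching step just described, which has branching factor at most $c_F := \binom{|V(F)|}{2} = O(1)$ and depth at most $\ell$, giving a search tree with at most $c_F^{\ell} = 2^{O(\ell)}$ leaves, each processed in $n^{O(1)}$ time; (3) correctness: completeness follows because any optimal weak set of size $\le \ell$ must contain an edge of $H[S]$ for every $F$-graph $S$ (otherwise that $F$-graph survives in $H$), so some branch follows an optimal solution, while soundness is immediate since every branch only produces a subgraph obtained by deleting $\le \ell$ edges; the monotonicity remark guarantees we never need to ``undo'' a deletion. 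Putting the pieces together gives the running time $2^{O(\ell)} \cdot n^{O(1)}$.

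I do not expect any genuine obstacle here --- this is essentially Cai's argument \cite{Cai96} --- but the one point that needs slightly more care than in \textsc{$F$-free Edge Deletion} is the definition of ``$F$-graph'': a copy of $F$ induced in $H$ counts as something that must be destroyed \emph{only if} $G[S] \simeq F$ as well. This means that when we delete an edge $e$ of $H[S]$ we never need to worry about the edge being present in $G$ but not in $H$ ``reappearing'', and more importantly that deleting edges can only decrease the set of $F$-graphs (an $F$-graph of $H'$ with $E(H') \subseteq E(H)$ requires $H'[S] \simeq F$, hence $H[S]$ has $H'[S]$ as a spanning subgraph --- but $F$-closure violations are about induced subgraphs, so one should check $H[S]$ need not itself be $\simeq F$; however $G[S] \simeq F$ still holds, so the only new constraint is on $H'$, and since $H'$ has fewer edges there are at most as many candidate sets $S$ to worry about as one branches). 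This is exactly the content of Observation~\ref{obs:subgraph}, and it is what makes the naive branching terminate correctly. The rest is routine.
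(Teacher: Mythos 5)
Your proposal is correct and follows essentially the same route as the paper: a Cai-style bounded search tree that finds an $F$-graph (a vertex set $S$ with both $H[S]\simeq F$ and $G[S]\simeq F$), branches on its at most $\binom{|V(F)|}{2}$ edges, and recurses to depth $\ell$, with the sandwich argument $H'[S]\subseteq H[S]\subseteq G[S]\simeq F$ justifying that deletions never create new $F$-graphs. This gives the claimed $2^{O(\ell)}\cdot n^{O(1)}$ bound exactly as in the paper's proof.
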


\begin{proof}
 We basically use the main idea given in \cite{Cai96}.
 Since $F$ is of fixed size, we can list all its induced subgraphs isomorphic to $F$ in polynomial time. 
 For each induced subgraph $F'$ we check whether $G[F'] \simeq F$.
 If $G[F'] \simeq F$, then we must remove at least one of the edges of $F'$.
 We branch at all such possible edges and on each resulting graph we apply the same procedure for at most $\ell$ steps.
 If at some intermediate graph we have $G[F'] \not\simeq F$ for all of its induced subgraphs then we have found the desired subgraph within at most $\ell$ edge deletions.
 Otherwise, we can safely output that there is no such subgraph with at most $\ell$ edge removals.
 The main difference with the algorithm given in \cite{Cai96}, is that we first check at each intermediate graph whether the vertices of $F'$ induce a forbidden graph in $G$, before branching at each of the subgraphs.
 As the branching will generate at most $|E(F)|$ such instances  and the depth of the search tree is bounded by $\ell$, the overall running time is  $2^{\Oh(\ell)}\cdot n^{\mathcal{O}(1)}$.
\end{proof}

Next we show that \textsc{Strong $F$-closure} has a polynomial bi-kernel with this parameterization whenever $F$ is a fixed graph.
We obtain this result by constructing bi-kernelization that reduces \textsc{Strong $F$-closure} to the \textsc{$d$-Hitting Set} problem that is the variant of \textsc{Hitting Set} with all the sets in $\mathcal{C}$ having $d$ elements.
Notice that this result comes in contrast to the \textsc{$F$-free Edge Deletion} problem, as it is known that there are fixed graphs $F$ for which there is no polynomial compression~\cite{CC15} unless $\textrm{NP}\subseteq \textrm{coNP/ poly}$.

\begin{theorem}\label{theo:weak:bounded:polykernel}
For every fixed graph $F$, \textsc{Strong $F$-closure} has a polynomial bi-kernel, when parameterized by $\ell=
|E(G)|-k$.
\end{theorem}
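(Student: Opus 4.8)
The plan is to turn \textsc{Strong $F$-closure} parameterized by $\ell=|E(G)|-k$ into an equivalent instance of \textsc{$d$-Hitting Set} with $d=|E(F)|$, and then invoke the classical sunflower-lemma kernelization for \textsc{$d$-Hitting Set}. Since the composition of a polynomial-time reduction with a polynomial kernelization is a polynomial bi-kernelization, this yields the claimed bi-kernel into \textsc{$d$-Hitting Set}.

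The reduction rests on reformulating the $F$-closure as an edge-hitting condition. As $F$ is fixed, we can enumerate in polynomial time all vertex subsets $S$ with $|S|=|V(F)|$ and $G[S]\simeq F$; call these the \emph{copies} of $F$, and let $\mathcal{C}$ be the family of their edge sets $E(G[S])\subseteq E(G)$. For a spanning subgraph $H$ and a copy $S$, the induced subgraph $H[S]$ is exactly $G[S]$ with the weak edges lying inside $S$ removed, so $E(H[S])\subseteq E(G[S])$ and $|E(G[S])|=|E(F)|$; hence $H[S]\simeq F$ if and only if no edge of $E(G[S])$ is weak. Consequently $H$ satisfies the $F$-closure if and only if the weak edge set $E(G)\setminus E(H)$ meets every member of $\mathcal{C}$. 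Combining this with the fact that $|E(H)|\ge k$ is equivalent to using at most $\ell$ weak edges, we conclude that $(G,k)$ is a yes-instance if and only if $\mathcal{C}$ over the universe $E(G)$ admits a hitting set of size at most $\ell$.

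Two degenerate cases are handled directly: if $\mathcal{C}=\emptyset$ then $G$ is $F$-free and we output a fixed-size trivial yes-instance of \textsc{$d$-Hitting Set}; if $F$ has no edges but $\mathcal{C}\ne\emptyset$ then no edge can hit a member of $\mathcal{C}$ and we output a fixed-size no-instance. Otherwise every set in $\mathcal{C}$ has exactly $d=|E(F)|\ge 1$ elements, so $(E(G),\mathcal{C},\ell)$ is a genuine instance of \textsc{$d$-Hitting Set} parameterized by solution size that is equivalent to $(G,k)$. Applying the sunflower-lemma kernelization (see, e.g., \cite{CyganFKLMPPS15}) reduces it to an equivalent \textsc{$d$-Hitting Set} instance with $\ell^{\Oh(d)}$ sets and elements; since $d$ is a constant this is polynomial in $\ell$, and the parameter does not increase. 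The overall procedure runs in polynomial time, so it is a polynomial bi-kernelization of \textsc{Strong $F$-closure} into \textsc{$d$-Hitting Set}.

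There is no real obstacle here: the only delicate points are the edge-counting argument that characterizes when $H[S]\simeq F$ (which uses that $H[S]$ is an induced subgraph of $G[S]$ and that $F$ has a constant number of edges) and the bookkeeping for the $F$-free and edgeless corner cases. Moreover, since \textsc{$d$-Hitting Set} is in NP and \textsc{Strong $F$-closure} is NP-complete, this bi-kernel also gives a polynomial kernel, as noted in the preliminaries.
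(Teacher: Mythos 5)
Your proposal is correct and follows essentially the same route as the paper: both characterize the $F$-closure condition as requiring every induced copy of $F$ in $G$ to contain a weak edge, reduce to \textsc{$d$-Hitting Set} over the universe $E(G)$ with $d=|E(F)|$ and budget $\ell$, and then invoke a known polynomial kernel for \textsc{$d$-Hitting Set} (the paper cites Abu-Khzam's kernel, you cite the sunflower-lemma kernelization, which is an immaterial difference). Your explicit edge-counting justification that $H[S]\simeq F$ iff no edge of $G[S]$ is weak is the same point the paper makes when arguing that removing an edge of a copy cannot leave behind an $F$-graph, so no gap remains.
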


\begin{proof}
Let $d$ be the number of edges of $F$.
We enumerate all the induced subgraphs of $G$ isomorphic to $F$ in polynomial time.
Let $\mathcal{F}_G=\{F_1, \ldots, F_q\}$ be the produced subgraphs isomorphic to $F$ such that $V(F_i) \neq V(F_j)$.
For each $F_i \in \mathcal{F}_G$, we construct the set $E_i = E(F_i)$.
Notice that $|E_1|= \cdots = |E_q|=d$.
Now our task is to select at most $\ell$ edges $E'$ from $G$ such that $E' \cap E_i \neq \emptyset$ for every $E_i$.
We claim that such a subset of edges is enough to produce a solution for the \textsc{Strong $F$-closure}.
To see this, consider an $F$-graph $F_i$ of $G$ and denote by $G'$ the graph obtained from $G$ by removing an edge $e=xy$ of $F_i$.
Assume for contradiction that at least one new $F$-graph $F'$ is created in $G'$ so that $F' \notin \mathcal{F}_G$ and $F' \in \mathcal{F}_{G'}$.
Then both $x$ and $y$ must belong to $F'$ which implies that $x$ and $y$ are non-adjacent in $G'[F']$.
This, however, contradicts the fact that $G[F']$ induces a graph isomorphic to $F$, because $x$ and $y$ are adjacent in $G$.
Thus $\mathcal{F}_{G'}\subset \mathcal{F}_{G}$ which implies that the described set of edges $E'$ constitutes a solution.
This actually corresponds to the \textsc{$d$-Hitting Set} problem: given a collection of sets $C_i=E_i$ each of size $d$ from a universe $U = E(G)$, select at most $\ell$ elements from $U$ such that every set $C_i$ contains a selected element. Then we use the result of Abu-Khzam~\cite{Abu-Khzam10} (see also~\cite{CyganFKLMPPS15}) that  \textsc{$d$-Hitting Set} admits a polynomial kernel with the universe size  $\Oh(\ell^d)$  and with $\Oh(\ell^d)$ sets.
\end{proof}

We would like to underline that Theorems~\ref{theo:weak:bounded:fpt} and \ref{theo:weak:bounded:polykernel} are fulfilled for the case when $F$ is a fixed graph of constant size,
as the degree of the polynomial in the running time of our algorithm for depends on the size of $F$ and, similarly, the size of $F$ is in the exponent of the function defining the size of our bi-kernel.
We can hardly avoid this dependence as it can be observed that for $\ell =0$,  \textsc{Strong $F$-closure} is equivalent to asking whether the input graph $G$ is $F$-free, that is,
we have to solve the \textsc{Induced Subgraph Isomorphism} problem. It is well known that \textsc{Induced Subgraph Isomorphism} parameterized by the size of $F$ is \W{1}-hard
when $F$ is a complete graph or graph without edges~\cite{DowneyF13}, and the problem is \W{1}-hard when $F$ belongs to other restricted families of graphs \cite{KhotR02}.

\medskip
We conclude with a few open problems.
An interesting question is whether \textsc{Strong Triadic Closure} is FPT when parameterized by 
$r= k-\mu(G)$. We proved that this holds on graphs of maximum degree at most 4, and we believe that this question is interesting not only on general graph but also on  various graph other classes. In particular, what can be said about planar graphs? To set the background, we show that \textsc{Strong Triadic Closure}
is NP-hard on this class. The following lemma is needed for the proof of Theorem \ref{theo:strong:planar}.

\begin{lemma}[\cite{KP17}]\label{lem:twins}
Let $x$ and $y$ be true twins in $G$ and let $H$ be a solution for \textsc{Strong $P_3$-closure}.
Then $xy \in E(H)$ and for every vertex $u \in N(x)$, $xu \in E(H)$ if and only if $yu \in E(H)$.
\end{lemma}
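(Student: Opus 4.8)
The plan is to prove the two assertions separately. Throughout I read $H$ as a \emph{maximum} solution, i.e.\ a spanning subgraph of $G$ satisfying the $P_3$-closure with the largest possible number of edges; this is the relevant notion, since a sub-optimal solution (for instance the empty subgraph) need not contain $xy$ at all. Recall that $x,y$ being true twins means $N[x]=N[y]$; in particular $xy\in E(G)$ and $N(x)\setminus\{y\}=N(y)\setminus\{x\}$. I will write $N_H(v)$ for the set of strong neighbours of $v$, i.e.\ the neighbours of $v$ in $H$.

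For the first assertion I would show that inserting $xy$ into \emph{any} spanning subgraph satisfying the $P_3$-closure preserves the closure. After adding $xy$, the only triples that can newly induce a $P_3$ are $\{w,x,y\}$ with $w\in N_H(x)$ (the triple $\{w,x,y\}$ then inducing a path through $w,x,y$ in $H$) or with $w\in N_H(y)$. In the first case $w\in N(x)$, hence $w\in N(y)$ by the twin property, so $wy\in E(G)$ and $G[\{w,x,y\}]$ is a triangle, i.e.\ $G[\{w,x,y\}]\not\simeq P_3$; the second case is symmetric. Thus $H+xy$ still satisfies the $P_3$-closure, so if $xy\notin E(H)$ then $H+xy$ would be a feasible solution with strictly more edges, contradicting maximality. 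Hence $xy\in E(H)$, and in particular $y\in N_H(x)$ and $x\in N_H(y)$.

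For the symmetry, fix $u\in N(x)\setminus\{y\}=N(y)\setminus\{x\}$ and suppose toward a contradiction that $xu\in E(H)$ while $yu\notin E(H)$ (the reverse case being symmetric). Two structural facts from the closure drive the argument: each of $N_H(x)$ and $N_H(y)$ induces a clique in $G$ (two strong neighbours $a,b$ of a common vertex form a path $a,x,b$ in $H$, forcing $ab\in E(G)$), and, since $xy,xu\in E(H)$, every $w\in N_H(x)$ satisfies $wu\in E(G)$. The natural attempt is to repair the asymmetry by making $yu$ strong: because $xy$ and $xu$ are already strong, this merely turns $\{x,y,u\}$ into a triangle of $H$, which is harmless in isolation.

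The main obstacle is precisely that making $yu$ strong is \emph{not} automatically safe: the vertex $y$ may have a strong neighbour $w\in N_H(y)$ with $uw\notin E(G)$, so that $\{u,y,w\}$ becomes a forbidden strong $P_3$ whose $G$-image is a genuine $P_3$. Thus a purely local fix can fail, and handling this is the crux of the proof. The route I would take is a \emph{simultaneous} exchange around the two twins: exploiting $N(x)\setminus\{y\}=N(y)\setminus\{x\}$ together with the fact that $N_H(x)$ and $N_H(y)$ are cliques lying in the common neighbourhood, one reroutes the strong edges incident to $x$ and $y$ so as to equalise their strong neighbourhoods while not decreasing $|E(H)|$. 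The delicate steps to verify are that each rerouting preserves the $P_3$-closure and that the offending witnesses $w$, and $u$ itself, are correctly accounted for when they are shared between $x$ and $y$ or are endpoints of further strong edges. Once it is shown that no asymmetric configuration survives this exchange in a maximum solution, the equivalence $xu\in E(H)\Leftrightarrow yu\in E(H)$ follows for every $u\in N(x)$, completing the proof.
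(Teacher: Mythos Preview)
The paper does not actually prove this lemma; it is imported from~[KP17] and used as a black box in the proof of Theorem~6. So there is no in-paper proof to compare against, and the question is simply whether your argument is sound.

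Your treatment of the first assertion is correct: adding $xy$ to any spanning subgraph satisfying the $P_3$-closure preserves the closure (every new $P_3$ is $\{w,x,y\}$ with $w\in N(x)=N(y)$, hence a triangle in $G$), so $xy$ belongs to every maximum solution.

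The second assertion, however, cannot be proved in the universal form you set up, because it is false in that form. Take $V(G)=\{x,y,a,b\}$ with $E(G)=\{xy,xa,xb,ya,yb\}$; then $x,y$ are true twins. One checks that the maximum number of strong edges is $3$, and $H$ with $E(H)=\{xy,xa,yb\}$ (the path $a\text{-}x\text{-}y\text{-}b$) is a maximum solution satisfying the $P_3$-closure in which $xa\in E(H)$ but $ya\notin E(H)$. Thus ``suppose toward a contradiction that $xu\in E(H)$ while $yu\notin E(H)$'' leads nowhere: your exchange produces another maximum solution (here $\{xy,xa,ya\}$), not a contradiction to the maximality of $H$.

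What the exchange \emph{does} prove is the existential version: there exists an optimal $H$ in which $x$ and $y$ are true twins in $H$. This is also exactly how the lemma is used downstream (the proof of Theorem~6 repeatedly modifies the solution and only needs \emph{some} optimal $H$ with the stated structure). Concretely, given an optimal $H$ with $xy\in E(H)$, replace the strong neighbourhood of $y$ outside $\{x\}$ by that of $x$ outside $\{y\}$; using $N[x]=N[y]$ and the fact that $N_H(x)$ is a clique in $G$, every potential new $P_3$ through $y$ is a triangle in $G$, so the closure is preserved and the edge count does not drop. You should reframe your second part in this existential way and drop the contradiction setup; once you do, the ``delicate steps'' you allude to are in fact routine.
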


\begin{figure}[t]
\centering
\includegraphics[scale=1.0]{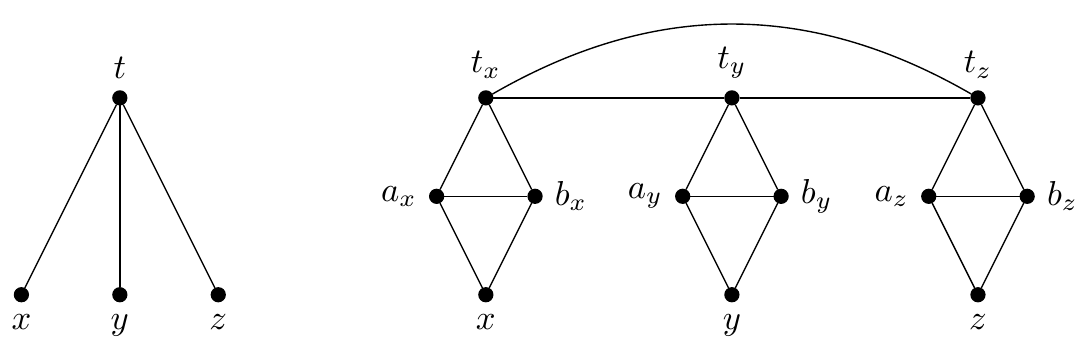}
\caption{The planar configuration used in the proof of Theorem~\ref{theo:strong:planar}.}\label{fig:figGadget}
\end{figure}

\begin{theorem}\label{theo:strong:planar}
\textsc{Strong Triadic Closure} is NP-hard on planar graphs.
\end{theorem}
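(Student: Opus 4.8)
The plan is to prove NP-hardness of \textsc{Strong Triadic Closure} on planar graphs by a reduction from a planar, bounded-degree variant of an NP-hard problem --- the natural candidates being \textsc{Planar 3-SAT} or \textsc{Planar Vertex Cover} on graphs of bounded degree. I would use the planar configuration depicted in Figure~\ref{fig:figGadget} as the basic building block: a small gadget, drawn without crossings, whose internal strong/weak edge labellings encode a truth value or a vertex-selection bit. The idea is that \textsc{Strong Triadic Closure} on triangle-free-ish dense gadgets forces a combinatorial ``choice'' at each gadget, and Lemma~\ref{lem:twins} lets us rigidly control the behaviour of true-twin vertices inside a gadget, so that any optimal solution is forced into one of two canonical shapes --- exactly the two truth assignments (or in/out decisions) we want to simulate.

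The key steps, in order, would be as follows. First, fix the source problem; I would take \textsc{Planar 3-SAT} with the variable-clause incidence graph drawn in the plane with variables on a line and clauses above/below, since this gives a clean routing picture. Second, design a \emph{variable gadget}: a planar graph (using true twins so Lemma~\ref{lem:twins} pins down its solution structure) in which every maximum strong-edge set corresponds to one of exactly two states, ``true'' and ``false'', with a prescribed number of strong edges in each state, and with designated ``output'' edges whose strong/weak status is determined by the state. Third, design a \emph{clause gadget} connecting the three literal outputs so that the clause gadget can attain its target number of strong edges if and only if at least one incoming literal is set to the satisfying value; this is where one typically uses a small $P_3$-obstruction that can only be ``destroyed'' by an extra edge coming from a satisfied literal. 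Fourth, design \emph{wire/crossing-free routing gadgets} that transmit a literal's value along a path of the planar layout without loss, again exploiting true twins for rigidity. Fifth, compute the global threshold $k$: the total of all variable-gadget contributions (which is state-independent by design) plus the sum of the clause-gadget contributions, so that the instance has a strong $P_3$-closure solution of size $\ge k$ if and only if the formula is satisfiable. Sixth, verify planarity of the whole construction by checking each gadget is planar and the gluing respects the planar embedding of the incidence graph, and verify the reduction is polynomial-time.

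The main obstacle I expect is the simultaneous demands of \emph{planarity} and \emph{rigidity}: the gadgets that are easiest to analyse for \textsc{Strong Triadic Closure} (e.g.\ those built from cliques or from many mutually adjacent true twins, as exploited in the split-graph hardness of \cite{KP17} and in Lemma~\ref{lem:twins}) tend not to be planar, so one cannot simply reuse the known general-graph or split-graph reductions. I would therefore have to find a planar substitute that still forces a binary choice --- likely something built around short cycles or small ``book''-like configurations of triangles sharing an edge (matching the picture in Figure~\ref{fig:figGadget}) --- and then carefully argue, using Lemma~\ref{lem:twins} together with a counting argument on the number of strong edges lost per $P_3$ that must be broken, that no optimal solution can do anything other than pick one of the two intended states. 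A secondary technical point is handling the literal negation and the clause's ``at least one true'' semantics planarly; I would route each variable's two polarities to opposite sides of the gadget so that a clause lying above consumes positive outputs and one lying below consumes negative outputs, or use a small planar inverter gadget whenever the layout forces a polarity flip. Once the gadgets and the exact strong-edge budget are pinned down, the equivalence proof itself should be a routine (if tedious) case analysis over the local configurations.
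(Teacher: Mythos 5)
There is a genuine gap: your proposal is a plan, not a proof. Every load-bearing component --- the variable gadget whose optimal labellings realize exactly two canonical states, the clause gadget that reaches its edge budget iff some literal is satisfied, the wire/inverter gadgets, and the global threshold $k$ --- is described only by the properties it \emph{should} have, with no concrete construction and no verification that optimal solutions of \textsc{Strong Triadic Closure} are actually forced into the intended shapes. You yourself identify the central difficulty (rigid gadgets for this problem tend to be dense and non-planar, while planar gadgets are hard to make rigid) but do not resolve it; asserting that some ``book-like configuration of triangles'' will work and that the equivalence proof ``should be routine'' is precisely the part that needs to be done. In particular, for \textsc{Strong $P_3$-closure} the interaction between gadgets is subtle (a strong edge in one gadget can create a strong $P_3$ across a gluing boundary, and weak edges elsewhere can destroy it), so without explicit gadgets and a counting argument the reduction cannot be checked, and the statement remains unproved.

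For contrast, the paper sidesteps the whole variable/clause/wire machinery by reducing from \textsc{Planar X3C} rather than \textsc{Planar 3-SAT}. There the planar bipartite element--triplet incidence graph is used directly as the skeleton; each triplet vertex and its three incident edges are replaced by a fixed small planar configuration of triangles (a middle triangle, three inner triangles, three outer triangles built on true-twin pairs), and Lemma~\ref{lem:twins} pins down the twins' behaviour so that each triplet subgraph has only three possible optimal local states. A short counting argument ($9\ell_0+10\ell_2+12\ell_3$ strong edges versus the constraint that each element vertex lies in at most one strong triangle) then shows the threshold $9m+3q$ is attainable iff an exact cover exists. This choice of source problem is what makes the proof short: the ``at least one / exactly one'' semantics is enforced globally by the budget rather than by bespoke clause and routing gadgets, which is exactly the machinery your plan leaves unbuilt.
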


\begin{proof}
We show the theorem by a reduction from {\sc PlanarX3C}.
In {\sc X3C} we are given a set $X$ with $|X|=3q$ elements and a collection $C$ of triplets of $X$ and the problem asks for a subcollection $C' \subseteq C$ such that every element of $X$ occurs in exactly one member of $C'$.
For the {\sc PlanarX3C} we associate a bipartite graph $G$ with this instance as follows: we have a vertex for every element of $X$ and
a vertex for every triplet of $C$ and there is an edge between an element and a triplet if and only if the element belongs to the triplet.
If the instance results in a graph $G$ that is (bipartite) planar then the problem is known to be NP-complete \cite{DF86}.
Let $G=(X \cup C,E)$ be an instance of {\sc PlanarX3C} with $|C|=m \geq q$.
We construct another graph $G^\prime$ by replacing the three edges incident to each triplet with the configuration shown in Figure~\ref{fig:figGadget}.
More precisely, we replace each triplet vertex $t$ by a triangle $\{t_x,t_y,t_z\}$ ({\it middle triangle}) and for each original edge $tx$ we introduce two triangles $\{t_x,a_x,b_x\}$ ({\it inner triangle}) and $\{a_x,b_x,x\}$ ({\it outer triangle}).
Thus for every triplet we associate seven triangles in which four of them are vertex-disjoint (the middle and the outer triangles) and the other three triangles (inner triangles) share all their vertices with two vertex-disjoint triangles.
Such a subgraph corresponding to the triplet $(x,y,z) \in C$ is simply called {\it triplet subgraph}.
Notice that $G^\prime$ remains a planar graph.
We prove that {\sc PlanarX3C} has an exact cover if and only if $G^\prime$ has a spanning subgraph with at least $9m+3q$ strong edges that satisfies the $P_3$-closure.

Assume $C'$ is an exact cover for {\sc PlanarX3C} with $|C'|=q$.
If a triplet belongs to $C'$ then we make the edges of all four vertex-disjoint triangles strong (see Figure~\ref{fig:figStrong}~(a)).
If a triplet does not belong to $C'$ then we make the edges of all inner triangles strong (see Figure~\ref{fig:figStrong}~(b)).
This labeling satisfies the $P_3$-closure as there is no $P_3$ spanned by strong edges and the total number of strong edges is $12q + 9(m-q)$ which gives the claimed bound.

For the opposite direction, assume that $G^\prime$ has a spanning subgraph $H$ with at least $9m+3q$ strong edges.
Consider the graph induced by the vertices $\{t_x,a_x,b_x,x\}$ that corresponds to an original edge between an element $x$ and a triplet $t$.
Since $a_x,b_x$ are true twins in $G$, by Lemma~\ref{lem:twins}, $E(H)$ contains the edge $a_xb_x$ and $a_xb_x$ are also true twins in $H$.
The latter implies that either one of the two triangles $\{x,a_x,b_x\}$, $\{t_x,a_x,b_x\}$ belongs to $H$, or no such triangle belongs to $H$.
The same observation carries along the vertices $a_y,b_y$ and $a_z,b_z$.
Thus for every triplet subgraph, $E(H)$ contains all its outer triangles, or all its inner triangles, or a combination of some inner and outer triangles.
These cases correspond to the three forms given in Figure~\ref{fig:figStrong}.
We show that there exists an optimal solution $H$ only with the first two forms of Figure~\ref{fig:figStrong},
which particularly means that every triplet subgraph of $H$ contains either all its outer triangles or all its inner triangles.

\begin{figure}[t]
\centering
\includegraphics[width=1.0\textwidth]{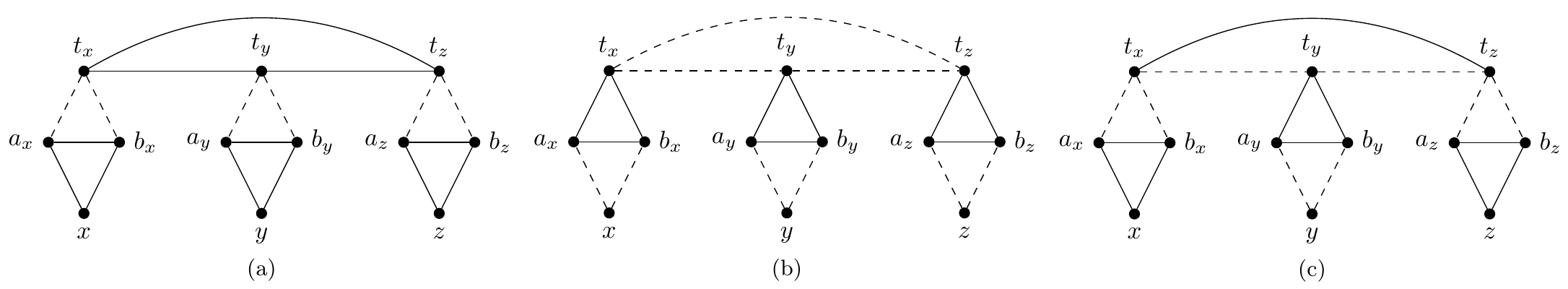}
\caption{A solid edge corresponds to a strong edge, whereas a dashed edge corresponds to a weak edge.
Form (a) has 12 strong edges and corresponds to a triplet that is a member of an exact cover.
Form (b) has 9 strong edges and corresponds to a triplet that does not belong to an exact cover.
Form (c) contains all other cases; we depict only one of them.}\label{fig:figStrong}
\end{figure}

To prove this, we first show that every middle triangle in a triplet subgraph has either all its edges strong or none of its edges is strong. We refer to the former case as strong middle triangle and the later as weak middle triangle.
Assume that the middle triangle contains at least one strong edge $t_xt_z$.
Then there is no other strong edge incident to $t_x$ or $t_z$.
If the inner triangle of $t_y$ is not strong, then we can safely make the edges $t_yt_x,t_yt_z$ strong.
Otherwise, the inner triangle of $t_y$ is strong and we remove both edges $t_ya_y$ and $t_yb_y$ from $H$ and add the edges $t_yt_x,t_yt_z$.
Thus if there is a strong edge in the middle triangle then there is a solution with a strong middle triangle.

Next we consider a (strong or weak) middle triangle.
If such a triangle is weak then $E(H)$ contains at most $9$ edges from its triplet subgraph.
In such a case we replace all its edges from $E(H)$ by the edges of its inner triangles by keeping the same size for $E(H)$.
For every strong middle triangle notice that all the edges of its inner triangles are weak.
If there is at most one outer triangle that is strong then we make the middle triangle weak and we replace its edges of $E(H)$ by the edges of its inner triangles.
Thus for every strong middle triangle we know that either two or three outer triangles are strong.
Also recall that for every weak middle triangle, all its outer triangles are weak.

For $i\in \{0,2,3\}$, let $\ell_i$ be the number of triplet subgraphs in which there are $i$ outer triangles strong.
We will show that, since $H$ contains at least $9m+3q$ edges, there are no triplet subgraphs with exactly two outer triangles strong, i.e., $\ell_2=0$.
Observe that $\ell_0 + \ell_2 + \ell_3 = m$.
Also notice that each of the subgraphs corresponding to $\ell_0$ contains 9 strong edges, $\ell_2$ contains 10 strong edges, and $\ell_3$ contains 12 strong edges.
Therefore the total number of strong edges is $9\ell_0 + 10\ell_2 + 12\ell_3$.
As $H$ contains at least $9m+3q$ edges, we get $\ell_2+3\ell_3 \geq 3q$.
Now notice that every vertex of $X$ is incident to at most one strong triangle.
Thus for each of the $\ell_2$ subgraphs there are $2$ vertices in $X$ that are incident to strong edges, whereas for each of the $\ell_3$ subgraphs there are $3$ such vertices in $X$. This implies that $2\ell_2 + 3\ell_3 \leq |X|=3q$.
Therefore $|E(H)| \geq 9m+3q$ holds only if $\ell_2=0$ and $\ell_3=q$, so that all triplet subgraphs with strong middle triangles correspond to an exact cover for the elements of $X$.
\end{proof}

The same question can be asked for the case when $F\neq P_3$ has a connected component with at least three vertices. As a first step, we give an FPT result when $F$ is a star.

\begin{theorem}\label{theo:strong:matching:k13:fpt}
For every $t\geq 3$, \textsc{Strong $K_{1,t}$-closure} can be solved in time $2^{\Oh(r^2)} \cdot n^{\mathcal{\Oh}(1)}$, where $r=k-\mu(G)$ .
\end{theorem}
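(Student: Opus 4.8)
The plan is to follow the template of the proof of Theorem~\ref{theo:strong:bounded4:fpt}: build a small "core" set $X$ of vertices whose size is bounded in terms of $r$, argue that the graph outside $X$ is essentially trivial for the $K_{1,t}$-closure, and then branch over the interaction between $X$ and the rest. The key combinatorial fact is that, since $t\geq 3$, a matching always satisfies the $K_{1,t}$-closure (a matching has no vertex of degree $\geq 2$, hence no induced $K_{1,t}$ at all), so $\mu(G)$ is a lower bound on the optimum and the parameter $r=k-\mu(G)$ measures how far above a maximum matching we must push. More is true: a spanning subgraph $H$ satisfies the $K_{1,t}$-closure precisely when for every vertex $v$ with $\deg_H(v)\geq t$ whose $H$-neighbourhood contains an independent set $I$ of size $t$, the vertex $v$ together with $I$ does \emph{not} induce $K_{1,t}$ in $G$ — equivalently, $G[I]$ must contain an edge. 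So the only "dangerous" vertices are those of $H$-degree at least $t$ whose neighbourhoods are "too independent" in $G$.

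First I would fix a maximum matching $M$ of $G$ and set up $X$ greedily: start with $X=\emptyset$, and while $G-X$ contains a vertex $v$ together with $t$ pairwise non-adjacent neighbours such that adding all edges at $v$ (or, more carefully, a suitable local gadget) would force us to "spend" part of the budget, add the relevant $O(t)$ vertices to $X$ and charge the step to $r$. Concretely, the aim is to show that after exhaustively applying such a rule, either we have already certified a yes-instance (because the strong edges forced so far, together with a maximum matching of what remains, exceed $\mu(G)+r$), or $|X|=O(r)$ and $G-X$ has the property that every maximum-matching-type solution restricted to $G-X$ can be taken to be a disjoint union of stars and edges whose non-leaf vertices have controlled structure — in particular, outside $X$ there is no "profit" to be made beyond a maximum matching. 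I would then prove a structural claim analogous to Claim~\ref{claim:structureG-X}: there is an optimal solution whose restriction to $G-X$ is a disjoint union of stars $K_{1,s}$ with $1\le s\le t-1$ (larger stars are forbidden unless destroyed, and a star $K_{1,s}$ with $s\le t-1$ is automatically safe) together with isolated vertices, and moreover each such star either is "useful" (contributes more than a matching on the same vertex set) only in bounded-size configurations near $X$.

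The branching step then mirrors the end of the proof of Theorem~\ref{theo:strong:bounded4:fpt}: since $|X|=O(r)$ and (after the greedy construction) every vertex of $X$ has bounded "useful" degree, there are $2^{O(r^2)}$ choices for the set $S$ of strong edges incident to $X$ (the $r^2$ in the exponent, rather than $r$, is where this proof is weaker than the $\Delta(G)\le 4$ case — without a degree bound, a vertex of $X$ may have up to $\Theta(r)$ relevant neighbours each contributing, giving $\Theta(r^2)$ edges to guess). For each consistent choice of $S$ that satisfies the $K_{1,t}$-closure locally, I would delete from $R:=G-X$ every edge that cannot extend $S$ (an analogue of Rule~\ref{rule:delR}), and then reduce the leftover problem on $R'$ to finding a maximum-weight object that can only be a matching plus a bounded set of safe small stars; this is solvable in polynomial time, e.g. via a matching computation~\cite{MicaliV80} after contracting the forced small-star gadgets. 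Comparing $|S|$ plus the value found against $\hat k$ decides the instance.

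The main obstacle I expect is the structural claim that the part of an optimal solution living in $G-X$ can be normalised to stars of size $\le t-1$ plus edges \emph{without losing edges}, and that doing so does not create a $K_{1,t}$-graph through $X$: one must rule out, using the greedy stopping condition that defines $X$, the situation where a high-degree vertex $v\notin X$ is genuinely forced to carry $\ge t$ strong edges yet its neighbourhood stays independent in $G$. Handling this requires care with how the greedy rule charges to $r$ — each time such a forced configuration is detected it should increase the guaranteed surplus over $\mu(G)$ by a constant, so that only $O(r)$ such vertices and hence $O(r)$ vertices of $X$ survive. Making the charging argument and the "no new $K_{1,t}$ through $X$" argument airtight, in the spirit of Claims~\ref{claim:trianglesM}--\ref{claim:intersecting}, is the technical heart of the proof; the rest is bookkeeping that yields the claimed $2^{\Oh(r^2)}\cdot n^{\Oh(1)}$ running time.
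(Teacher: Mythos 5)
Your overall plan (a greedily built core $X$ of size $O(r)$ charged against the matching lower bound, followed by branching on the interaction with $X$) is the right spirit, but the two steps that carry all the weight are exactly the ones left open, and they are handled quite differently in the paper. First, your greedy rule is not a well-defined reduction: ``while $G-X$ contains a vertex $v$ with $t$ pairwise non-adjacent neighbours such that adding all edges at $v$ would force us to spend part of the budget'' does not specify what is added to $X$, why the added edges are safe, or why each application increases the achievable solution value above $\mu(G)$ by a constant -- and you yourself flag this charging argument as the unproven heart. The paper's construction avoids ``dangerous high-degree vertices'' altogether: it fixes a maximum matching $M$ and greedily absorbs only two concrete matching-theoretic configurations -- an $M$-exposed vertex adjacent to an endpoint of a matching edge, or two matching edges $xy,wz$ with $G[\{x,y,w,z\}]\not\simeq 2K_2$. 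Each such step commits all edges inside a set of $3$ or $4$ vertices containing two disjoint edges, which can never induce $K_{1,t}$ for $t\geq 3$ (so the committed edges are unconditionally safe), and each step raises the guaranteed number of strong edges above $\mu(G)$ by at least one, so after fewer than $r$ steps either the instance is a yes-instance or $|X|<4r$ and, crucially, the leftover matching is an \emph{induced} matching of $G$. Your proposed normalization of the solution inside $G-X$ to stars $K_{1,s}$, $s\leq t-1$, is both unjustified (a large star whose leaves are not independent in $G$ is perfectly legal) and unnecessary once the induced-matching structure is in hand.

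Second, the $2^{O(r^2)}$ branching over strong edges incident to $X$ does not follow from $|X|=O(r)$ alone: without a degree bound (the feature that made the $\Delta(G)\leq 4$ argument of Theorem~\ref{theo:strong:bounded4:fpt} work), a vertex of $X$ may have arbitrarily many neighbours, and you give no reason why only $\Theta(r)$ of them are ``relevant''. The paper closes this gap by a kernelization rather than direct branching: the induced-matching edges are partitioned into at most $2^{8r}$ classes according to the $X$-neighbourhoods of their endpoints, and any class with more than $(2t-2)\cdot 4r$ edges is shrunk (Rule~\ref{rule:k13:03}), which is safe because in any solution each vertex of $X$ has at most $2t-2$ strong neighbours among the induced-matching vertices (otherwise $t$ of them are pairwise non-adjacent and a strong $K_{1,t}$ survives in $G$), so a surplus edge in a large class has both endpoints free of strong $X$-edges and can be committed. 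Only after this does brute force give the $2^{O(r^2)}\cdot n^{O(1)}$ bound. As written, your proposal is missing both the safe-and-charged greedy step and the mechanism (twin classes of an induced matching, or some substitute) that makes the guessing around $X$ finite, so the argument does not go through.
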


\begin{proof}
We prove the theorem by constructing a kernel for the problem.

Let $G$ be a graph and $M$ be a maximum matching of $G$. Let $V_M$ be the set of vertices of $G$ that are covered by $M$. Let $X$ be a subset of vertices of $V(G)$ and $A$ be a subset of edges of $G[X]$, both initially set to be empty. We add elements to $X$ and $A$ by performing the following steps in a greedy way:

\begin{enumerate}
\item If there is $v\in V(G)\setminus V_M$ and $xy\in M$ such that $vx\in E(G)$ or $vy\in E(G)$, then we add $v$, $x$ and $y$ to $X$ and add all the edges between $\{v,x,y\}$ to $A$.

\item If there is $xy,wz\in M$ such that $G[\{x,y,w,z\}]\not\simeq 2K_2$, then we add $x$, $y$, $w$ and $z$ to $X$ and add all the edges between $\{x,y,w,z\}$ to $A$.
\end{enumerate}

Note that since the set $\{v,x,y\}$ does not induce a $K_{1,t}$, and since $xy,wz\in E(G)$, the set $\{x,y,w,z\}$ does not induce a $K_{1,t}$ either, the edges added to $A$ in each step can be part of a solution. Moreover, after each application of step 1 or step 2, the size of the set $A\cup M$ is increased by at least one unity. As a consequence, if the steps can be applied at least $r$ times, then $|A\cap M|\geq |M|+r$ and therefore we have a yes instance.

After the exhaustive application of steps 1 and 2 in a greedy way, we consider a maximum matching in $G-X$. For simplicity, we call this matching $M$ again. Since step 2 can no longer be applied, $M$ is actually an induced matching of $G$. Moreover, since after every application of any of the steps at most 4 vertices were added to $X$ and since the steps have been applied at most $r-1$ times, we have that $|X|<4r$.

In what follows, we show that, in a yes instance, the size of $M$ can also be bounded by a function of $r$. We can partition the edges of $M$ according to their neighborhood inside $X$. There are at most $2^{4r}$ possible subsets of $X$ that can be the neighborhood of a given vertex of $G-X$. Then, we can partition the edges of $M$ into at most $2^{8r}$ classes, according to the neighborhoods of the two endpoints of the edge. We exhaustively apply the following rule.

\begin{Rule} \label{rule:k13:03}
If there exists a class of edges of $M$ that has size at least $(2t-2)\cdot 4r+1$, then delete one edge of the class from the graph and decrease the parameter by one.
\end{Rule}

To see that rule~\ref{rule:k13:03} is safe, assume that one given class contains at least $(2t-2)\cdot 4r+1$ edges. Note that, since $M$ is an induced matching in $G$, every vertex of $X$ can be adjacent to at most $2t-2$ vertices of $G-X$ in a solution, otherwise the  solution would contain a set of strong edges inducing a $K_{1,t}$ in $G$. This, together with the fact that $|X|<4r$ gives us that at most $(2t-2)\cdot 4r$ vertices of $G-X$ are adjacent to vertices of $X$ in a solution. Since the class contains at least $(2t-2)\cdot 4r+1$ edges, at least one edge of the class is such that both of its end points are not adjacent to any vertex of $X$ in the solution. This edge can therefore be part of every solution $H$.

Once rule~\ref{rule:k13:03} has been exhaustively applied, the number of vertices of the graph is bounded by $4r+2^{8r}\cdot (2t-2)\cdot 4r\cdot 2=g(r)$. It is now possible to use brute force to solve the problem in the following way. First, we guess which edges inside $X$ go into the solution. Since $|X|<4r$, this guessing takes $2^{O(r^2)}$ time. Since every vertex of $X$ can have at most $2t-2$ neighbors in $G-X$ in a solution, we can again guess which edges from $X$ to $G-X$ go into the solution. This takes $2^{O(r^2)}$ time. Finally, for each of these guesses made for the edges in $E(G)\setminus M$, we test which edges of $M$ can be added into the solution without forming an induced $K_{1,t}$ in $H$ that also induce a $K_{1,t}$ in $G$. This takes time $2^{\Oh(r)}$. The total running time of the brute force algorithm is therefore $2^{\Oh(r^2)}\cdot n^{\Oh(1)}$.
\end{proof}

Another direction of research is to extend \textsc{Strong $F$-closure} by replacing $F$ with a list of forbidden subgraphs $\mathcal{F}$ 
and settle the complexity differences compared to $\mathcal{F}=\{F\}$.


\end{document}